\documentclass[lettersize,journal]{IEEEtran}

\usepackage{amsmath,amssymb,amsfonts}
\usepackage{amsthm}
\usepackage{tabularx}
\usepackage{array}
\usepackage[caption=false,font=normalsize,labelfont=sf,textfont=sf]{subfig}
\usepackage{textcomp}
\usepackage{stfloats}
\usepackage{url}
\usepackage{verbatim}
\usepackage{graphicx}
\usepackage{cite}
\usepackage{bm}
\usepackage{subcaption} 
\usepackage{xcolor}
\usepackage[most]{tcolorbox}
\usepackage{dsfont}

\usepackage{amsmath}
\usepackage[ruled,vlined,linesnumbered]{algorithm2e}
\SetKwComment{Comment}{// }{}
\newtheorem{theo}{Theorem}
\newtheorem{corollary}{Corollary}

\newtheorem{assumption}{Assumption}

\newtheorem{lemma}{Lemma}

\newtheorem{remark}{Remark}

\usepackage{lscape}

\usepackage{booktabs}
\usepackage{multirow}

\tcbset{
  resultbox/.style={
    enhanced,
    breakable,
    boxrule=0.4pt,
    arc=0.8mm,
    outer arc=0.8mm,
    left=0.7mm,
    right=0.7mm,
    top=0.35mm,
    bottom=0.35mm,
    before skip=0.3\baselineskip,
    after skip=0.3\baselineskip,
    fontupper=\normalsize,
  }
}

\tcolorboxenvironment{theo}{
  resultbox,
  colback=blue!3,
  colframe=blue!55!black,
  borderline west={1.4pt}{0pt}{blue!65!black},
}

\tcolorboxenvironment{corollary}{
  resultbox,
  colback=cyan!3,
  colframe=cyan!45!black,
  borderline west={1.4pt}{0pt}{cyan!60!black},
}

\tcolorboxenvironment{lemma}{
  resultbox,
  colback=teal!3,
  colframe=teal!45!black,
  borderline west={1.4pt}{0pt}{teal!60!black},
}

\tcolorboxenvironment{prop}{
  resultbox,
  colback=teal!3,
  colframe=teal!45!black,
  borderline west={1.4pt}{0pt}{teal!60!black},
}

\tcolorboxenvironment{remark}{
  resultbox,
  colback=black!2,
  colframe=black!35,
  coltext=black,
  borderline west={1.2pt}{0pt}{black!45},
}

\tcolorboxenvironment{example}{
  resultbox,
  colback=purple!3,
  colframe=purple!45!black,
  borderline west={1.2pt}{0pt}{purple!60!black},
}

\begin{document}
\setlength\textfloatsep{8pt}
\setlength\intextsep{8pt}
\title{MM-LMPC: Multi-Modal Learning Model Predictive Control via Mode-Specific Terminal Design and Bandit-Based Exploration}

\author{Wataru Hashimoto, Kazumune Hashimoto, and Masako Kishida
\thanks{{Wataru Hashimoto and Kazumune Hashimoto are with the Graduate School of Engineering, The University of Osaka, Suita, Japan (e-mail: hashimoto@is.eei.eng.osaka-u.ac.jp, hashimoto@eei.eng.osaka-u.ac.jp). Masako Kishida is with the Institute of Systems and Information Engineering,
University of Tsukuba, Tsukuba, Ibaraki, Japan
(e-mail: kishida@cs.tsukuba.ac.jp). The corresponding author is Wataru Hashimoto.
This work is supported by JST CREST JPMJCR201, JST ACT-X JPMJAX23CK, and JSPS KAKENHI Grant 21K14184, and 22KK0155.
}}
}



\maketitle

\begin{abstract}
Learning Model Predictive Control (LMPC) improves iterative control tasks by
using previous executions to construct the terminal constraint and terminal cost
of the MPC problem. Although effective, this reuse of past trajectories can make
LMPC sensitive to the initial data. In particular, LMPC may repeatedly exploit
stored trajectories with favorable cost-to-go values while insufficiently
exploring alternative route patterns that could yield lower cost after further
improvement.
To address this issue, we propose Multi-Modal LMPC (MM-LMPC). The proposed
framework clusters past trajectories into motion modes, constructs a
mode-specific LMPC controller for each mode, and uses an LCB-based
meta-controller to select which mode-specific controller to execute at each
iteration. Mode information is incorporated into the terminal constraint and
terminal cost through two designs. The hard-constrained design uses
mode-specific terminal constraints and terminal costs constructed from the data
associated with each mode. The soft-regularized design retains a shared
terminal constraint while adding membership-based penalties to the terminal
cost. These designs reduce the bias caused by pooling all trajectories into a
single terminal memory while retaining the recursive feasibility and stability
structure of LMPC.
Our theoretical analysis shows that both designs preserve recursive feasibility
and closed-loop stability. For the hard-constrained design, we further establish
mode-wise cost convergence, asymptotic best-mode performance, and a logarithmic
cumulative regret bound under the LCB rule. Simulations on
multi-route obstacle-avoidance tasks show that MM-LMPC improves exploration and
achieves lower costs than standard LMPC.
\end{abstract}

\begin{IEEEkeywords}
Model predictive control, iterative tasks, exploration and exploitation, multi-modal control.
\end{IEEEkeywords}

\section{Introduction}
\IEEEPARstart{M}{odel} Predictive Control (MPC) is a widely used control strategy that determines control inputs by repeatedly solving a finite-horizon optimal control problem at each sampling instant based on a predictive model of the system dynamics \cite{MPC}. Its ability to explicitly handle system constraints and multivariate systems has made MPC a powerful tool in various engineering domains, from process control \cite{MPCengineering} to autonomous systems such as robotics and self-driving vehicles \cite{MPCAGV1,MPCLocomotion}.
 However, since MPC determines the control input by solving an optimal control problem over a relatively short prediction horizon, its decisions may deviate from the true infinite-horizon optimal solution. This can lead to high-cost suboptimal performance, particularly in scenarios where long-term effects and delayed consequences play a significant role in achieving the control objectives.

To address this problem, Ugo Rosolia and Francesco Borrelli 
 proposed Learning Model Predictive Control (LMPC) \cite{iterative1}, which repeatedly applies control with MPC to iterative tasks while leveraging state and input trajectories from previous iterations to improve control performance. In their approach, terminal constraints and terminal cost functions are progressively updated using data from successful past iterations, thereby ensuring recursive feasibility of the optimization problem, stability of the closed-loop system, and non-increasing iteration costs under suitable assumptions.
 
However, a key limitation of the LMPC framework is its strong dependence on the trajectories collected in early iterations.
Since both the sampled safe set and the terminal cost are constructed from states visited in previous successful trials, the finite-horizon optimizer is biased toward terminal states with favorable cost-to-go values in the stored data.
Consequently, even when the initial dataset contains feasible trajectories corresponding to multiple qualitatively different routes, trajectories belonging to modes with initially large cost-to-go values may be rarely selected as terminal candidates and may therefore remain under-explored.
For example, in a navigation task with obstacles, even if feasible trajectories are initially provided on both sides of an obstacle, LMPC may keep refining the side with the smaller initial cost-to-go and fail to discover that the other side can eventually yield a lower-cost route. A naive alternative is to run independent LMPC processes from several or all
initial trajectories and select the best result. However, this requires
repeated task executions for multiple LMPC processes, which becomes costly as
the number of initial trajectories increases.

To overcome this limitation, we propose Multi-Modal Learning Model Predictive Control (MM-LMPC), a framework that systematically explores and exploits multiple solution modes while preserving the recursive feasibility and stability properties of LMPC. Here, a mode refers to a qualitatively distinct way of accomplishing the same task. Examples include different routes around obstacles and different grasp/contact patterns in robotic manipulation. 

MM-LMPC consists of a two-level control architecture together with a
mode-identification and terminal-memory update module. The mode-identification
module associates stored trajectories with modes and uses this information to
construct and update the terminal ingredients associated with each mode.
Based on this mode information, the upper-level LCB-based multi-armed-bandit
meta-controller selects a mode by balancing exploitation of well-performing
modes with exploration of under-tested ones. The lower-level mode-specific
LMPC controller then executes the selected mode in a receding-horizon manner.
After each execution, the resulting closed-loop trajectory is processed by
the mode-identification module, and the stored trajectory data, mode
information, and terminal ingredients are updated for subsequent iterations.

For the lower-level controller, we consider two terminal designs that differ
in how trajectory data are shared across modes. The \emph{hard-constrained}
formulation uses only mode-assigned trajectories to construct mode-specific
terminal constraints and costs, thereby enforcing strict mode separation.
In contrast, the \emph{soft-regularized} formulation retains the shared
sampled safe set while adding membership-based terminal penalties. This keeps
trajectories collected under other modes available as terminal candidates
while encouraging the optimizer to select terminal states associated with the
selected mode.

Our theoretical analysis establishes recursive feasibility and asymptotic
stability for both formulations. We also derive iteration-wise cost bounds accounting for LCB exploration
and, in the soft formulation, the membership penalty.
For the hard-constrained formulation, we show that the realized
closed-loop cost approaches the best limiting cost among the modes
infinitely often. Under an additional summability assumption, we further
show that the cumulative regret relative to the best limiting cost among
the modes grows at most logarithmically with the number of iterations. Simulations on a
minimum-time reach--avoid problem for a Dubins car show that MM-LMPC explores
initially unfavorable routes and finds lower-cost trajectories than standard
LMPC.

\textbf{Related work:}

\emph{Learning from repeated executions and LMPC:}
Iterative learning control (ILC) improves tracking performance in
repetitive tasks by updating the control input using errors observed in
previous executions \cite{ILC1}. To incorporate feedback during each
execution and explicitly enforce state and input constraints, ILC has
also been combined with model predictive control (MPC)
\cite{ILMPC1,ILMPC3,ILMPC4}. Recent data-driven extensions develop
robust and model-free learning laws
\cite{TCybDataComp2022,TCybIndirectILC2024}, but these methods are
primarily formulated for tracking a prescribed reference trajectory.
Reference-free Learning Model Predictive Control (LMPC) instead
constructs a sampled terminal constraint and an empirical terminal cost
directly from successful closed-loop trajectories \cite{iterative1}.
Under suitable assumptions, this construction guarantees recursive
feasibility and closed-loop stability and yields non-increasing
iteration costs. Its convergence and optimality properties have also
been analyzed in \cite{UgoOpt}. Subsequent LMPC extensions have
addressed stochastic systems \cite{iterative4}, unknown dynamics
\cite{self2}, distributed tasks \cite{iterative5}, and alternative
certified terminal constructions \cite{NNLMPC}.

\emph{Learning terminal ingredients of MPC:}
RL and ADP have been used to approximate long-horizon value functions for MPC terminal costs \cite{AVITerminalMPC2023,VITerminalError2024}, while other approaches learn terminal regions or construct terminal ingredients directly from input--output data \cite{RLTerminalSet2026,DataTerminalMPC2021}. In contrast, MM-LMPC constructs the terminal constraint and terminal cost directly from stored trajectories, without requiring a separate learning or optimization procedure for the terminal ingredients.

\emph{Representing multiple solution modes:}
Within the LMPC literature, multimodality has also been addressed for
systems with switching physical dynamics, where historical data are
used to construct local predictive models and sampled safe sets for
different dynamical modes \cite{MMLMPC}. In contrast, the present work
considers multimodality arising from multiple qualitatively distinct
feasible solutions to a single nonconvex task, while the system
dynamics remain fixed.

Related trajectory-optimization and MPC methods have explicitly
considered multiple locally distinct solutions. Topology-aware planners
generate and optimize candidate trajectories belonging to different
homology or homotopy classes
\cite{RosmannTopology2017,deGrootTopology2025}, while multi-modal MPC
solves multiple maneuver-specific trajectory-optimization problems in
parallel and selects among the resulting locally optimal plans
\cite{AdajaniaMMPC2022}. These methods generate and compare alternative solutions within an online planning episode. MM-LMPC instead preserves mode-associated experience across complete task executions and progressively updates the corresponding terminal information, thereby enabling different solution modes to be refined over successive iterations.

A preliminary version was presented at the 2026 American Control Conference (ACC) \cite{ACC}. The current paper substantially extends it by introducing a soft-regularized terminal design, developing a unified theoretical analysis, and evaluating the proposed method on a more complex reach--avoid scenario.

\section{Problem Formulation}\label{sec:problem}

We consider a discrete-time nonlinear system
\begin{equation}\label{eq:system}
    x_{t+1} = f(x_t, u_t), \quad x_t \in \mathbb{R}^n, \; u_t \in \mathbb{R}^m,
\end{equation}
subject to state and input constraints
\begin{equation}
    x_t \in \mathcal{X}, \quad u_t \in \mathcal{U}.
\end{equation}
We consider an iterative control task in which each execution starts from the
fixed initial state $x_0=x_S$ and aims to reach the equilibrium state $x_F$.
Feasible trajectories may belong to multiple qualitatively distinct modes,
for example when the state constraint set $\mathcal{X}$ is nonconvex.

Let $\mathcal{T}$ denote the set of complete feasible trajectories from
$x_S$ to $x_F$, and let $\mathcal{M}=\{1,\ldots,M\}$ denote a finite set
of solution modes. Each mode $m\in\mathcal{M}$ is associated with a subset
\[
\mathcal{T}_m\subseteq\mathcal{T},
\]
consisting of trajectories that share a common task-relevant qualitative
pattern. Examples include different routes or homotopy patterns in obstacle
avoidance and different contact patterns in manipulation.

The performance of each execution is evaluated by the accumulated stage cost. The
optimal control problem for a single execution is given by
\begin{equation}
\begin{aligned}
    \min_{\{u_t\}_{t=0}^{\infty}} \quad
        & \sum_{t=0}^{\infty} h(x_t,u_t) \\
    \mathrm{s.t.}\quad
        & x_{t+1}=f(x_t,u_t), \\
        & x_t\in\mathcal{X},
          \quad u_t\in\mathcal{U},
          \quad \forall t\geq 0,\\
        & x_0=x_S .
\end{aligned}
\label{eq:infinite_horizon_ocp}
\end{equation}
Here, $\{x_t\}_{t=0}^{\infty}$ denotes the state sequence generated from
$x_0=x_S$ by~\eqref{eq:system}, $\{u_t\}_{t=0}^{\infty}$ denotes the
corresponding input sequence, and $h$ is the stage cost function that evaluates
the control performance.

Since solving~\eqref{eq:infinite_horizon_ocp} directly is generally difficult,
our objective is to design an iterative control method that repeatedly
executes the task from $x_S$ to $x_F$ and uses data from previous executions
to improve the control solution, as measured by the objective
in~\eqref{eq:infinite_horizon_ocp}, while ensuring constraint satisfaction
and successful arrival at $x_F$ in every execution.
We make the following assumptions on the system and the stage cost function
$h$, which are standard in the MPC literature~\cite{MPC,iterative1}.

\begin{assumption}
\label{assum:system}
The system dynamics $f(\cdot,\cdot)$ are continuous.
The state and input constraint sets $\mathcal{X}$ and $\mathcal{U}$ are compact.
\end{assumption}

\begin{assumption}
\label{assum:stagecost}
The stage cost function $h:\mathcal{X}\times\mathcal{U}\to\mathbb{R}_{\ge 0}$ is continuous and satisfies
\begin{align}
h(x,u)>0, \qquad \forall x\in\mathcal{X}\setminus\{x_F\},\; u\in\mathcal{U},
\end{align}
where the final state $x_F$ is assumed to be a feasible equilibrium of the unforced system~\eqref{eq:system}, i.e., $f(x_F,0)=x_F$.
Moreover, the function $h$ satisfies
\begin{align}
h(x_F,0) &= 0,\\
h(x_F,u) &> 0, \qquad \forall u\in\mathcal{U}\setminus\{0\}.
\end{align}
\end{assumption}
Under Assumptions~\ref{assum:system} and~\ref{assum:stagecost}, any feasible solution of~\eqref{eq:infinite_horizon_ocp} with a finite objective value satisfies $\lim_{t\to\infty}x_t=x_F$ and $\lim_{t\to\infty}u_t=0$.

\section{Review of Learning Model Predictive Control}
\label{sec:review}

In this section, we review the Learning Model Predictive Control
(LMPC) framework~\cite{iterative1}, which forms the basis of our method.
At each iteration, LMPC solves a finite-horizon MPC problem whose
terminal set and terminal cost are constructed from successful previous
executions. We then discuss a limitation of standard LMPC
that motivates MM-LMPC.

In LMPC, the task is executed repeatedly over iterations $j=1,2,\dots$ using a finite-horizon MPC.
Consistent with~\eqref{eq:infinite_horizon_ocp}, each closed-loop execution starts from the same initial state, i.e.,
\begin{equation}
    x_0^j=x_S,\qquad j=1,2,\dots .
\end{equation}
At iteration $j$, a feasible closed-loop trajectory
\[
\{x_0^j,u_0^j,x_1^j,u_1^j,\dots,x_{T_j-1}^j,u_{T_j-1}^j,x_{T_j}^j\}
\]
is obtained, where $T_j$ denotes the time to reach the final state $x_F$.
From all successful previous iterations, LMPC constructs the terminal set as
\begin{equation}
\label{eq:SS}
    \mathcal{SS}_j
    =
    \bigcup_{i\in\mathcal{I}_j}
    \bigcup_{t=0}^{T_i}
    \{x_t^i\},
\end{equation}
where $\mathcal{I}_j$ is the set of indices of iterations that successfully completed the task before iteration $j$.
For each $x\in\mathcal{SS}_j$, LMPC defines the terminal cost as the minimal cost-to-go among previous visits:
\begin{equation}
\label{eq:Q}
    Q_j(x)
    =
    \begin{cases}
    \displaystyle
    \min_{(i,t)\in\mathcal{F}^j(x)}
    \sum_{k=t}^{T_i-1} h(x_k^i,u_k^i),
    & \text{if } x\in\mathcal{SS}_j,\\[2ex]
    +\infty,
    & \text{otherwise},
    \end{cases}
\end{equation}
where
\begin{equation}
\label{eq:F}
    \mathcal{F}^j(x)
    =
    \{(i,t)\mid i\in\mathcal{I}_j,\; t\in\{0,\dots,T_i\},\; x_t^i=x\}.
\end{equation}
With the above definitions of terminal set and cost, at time $t$ in iteration $j$, LMPC solves the finite-horizon optimal control problem:
\begin{subequations}
\label{eq:mpc}
\begin{align}
    \min_{\{u_{k|t}^j\}_{k=t}^{t+N-1}} \quad
        & \sum_{k=t}^{t+N-1} h(x_{k|t}^j,u_{k|t}^j)
        + Q_{j-1}(x_{t+N|t}^j) \\
    \mathrm{s.t.}\quad
        & x_{k+1|t}^j = f(x_{k|t}^j,u_{k|t}^j),
        \quad k=t,\dots,t+N-1,\\
        & x_{k|t}^j\in\mathcal{X},
        \quad u_{k|t}^j\in\mathcal{U},
        \quad k=t,\dots,t+N-1,\\
        & x_{t+N|t}^j\in\mathcal{SS}_{j-1},\\
        & x_{t|t}^j=x_t^j,
\end{align}
\end{subequations}
where $N\in\mathbb{N}$ is the prediction horizon.
After solving~\eqref{eq:mpc}, the optimal input and corresponding state trajectories are obtained as
\(\{u_{k|t}^{j,*}\}_{k=t}^{t+N-1}\) and \(\{x_{k|t}^{j,*}\}_{k=t}^{t+N}\), respectively.
Then, the first optimal control input \(u_{t|t}^{j,*}\) is applied to the system~\eqref{eq:system}, and the next state \(x_{t+1}^j\) is observed.
At the next time step, the optimization is solved again from the initial state \(x_{t+1}^j\).
This procedure is repeated at each time step, thereby implementing
receding-horizon control. During iteration \(j\), the terminal set
\(\mathcal{SS}_{j-1}\) and terminal cost \(Q_{j-1}\) remain fixed.
After completing iteration \(j\), the collected closed-loop trajectory
is used to update them to \(\mathcal{SS}_j\) and \(Q_j\), which are
then used in iteration \(j+1\).
As discussed in Section III of~\cite{iterative1}, LMPC guarantees desirable properties such as recursive feasibility and stability of the closed-loop system, and ensures that the total cost of each iteration does not increase.

However, standard LMPC can be sensitive to the trajectories available in the early iterations. Since the sampled safe set and terminal cost are constructed from successful past trajectories, the optimizer tends to favor stored states with low cost-to-go values. As a result, routes represented by high-cost initial trajectories may be revisited only rarely and receive little opportunity for improvement.

Figure~\ref{fig:example} illustrates this behavior in the reach--avoid
problem of Section~\ref{sec:sim}. The task admits eight routes defined by
passing above or below three obstacles, and the initial dataset contains a
feasible trajectory for each route (for details, see Section~\ref{sec:sim}). Nevertheless, standard LMPC primarily
refines a geometrically longer route with a relatively low initial cost,
while routes represented by higher-cost initial trajectories receive little
refinement. This behavior motivates the proposed MM-LMPC framework.

\begin{figure}[t]
    \centering
    \includegraphics[width=0.85\linewidth]{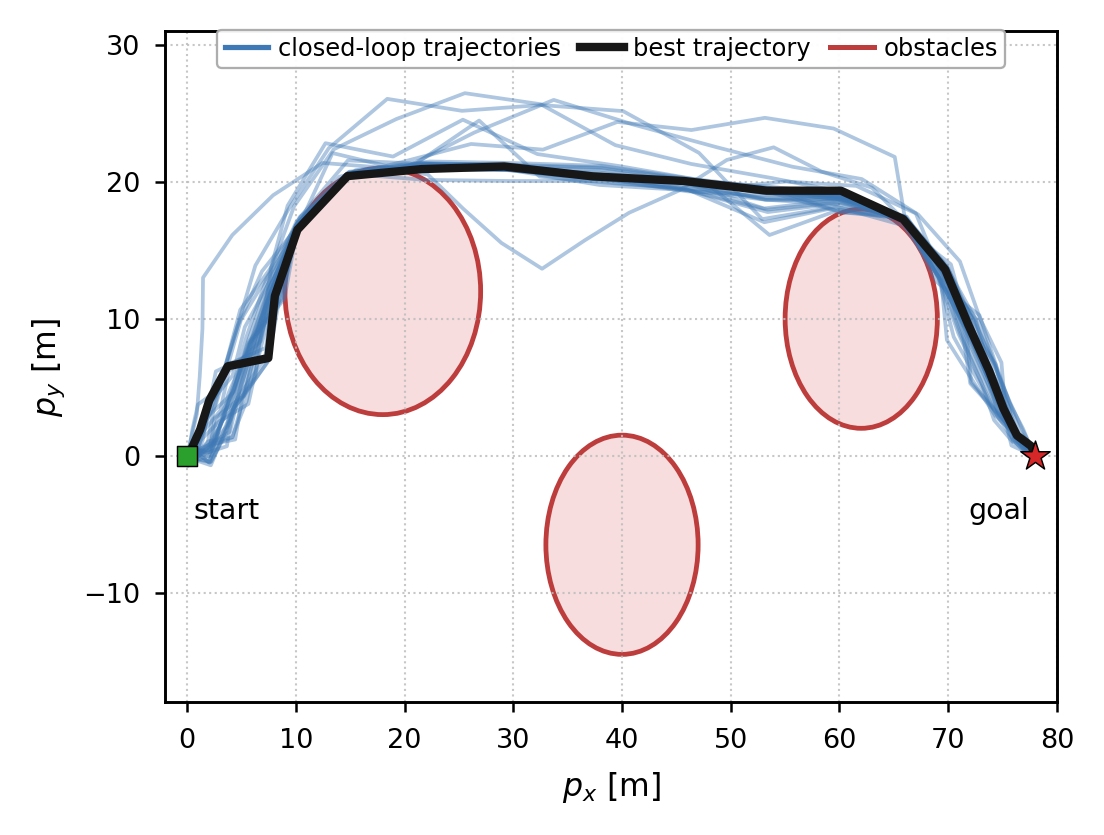}
    \caption{Closed-loop trajectories of standard LMPC in the reach--avoid example. Blue: all executions; black: the best trajectory. Despite feasible initialization of all eight routes, standard LMPC repeatedly refines one initially favorable route.}
    \label{fig:example}
\end{figure}

\section{Proposed Methodology: Multi-Modal Learning Model Predictive Control (MM-LMPC)}
\begin{figure}[t]
    \centering
    \includegraphics[width=\linewidth]{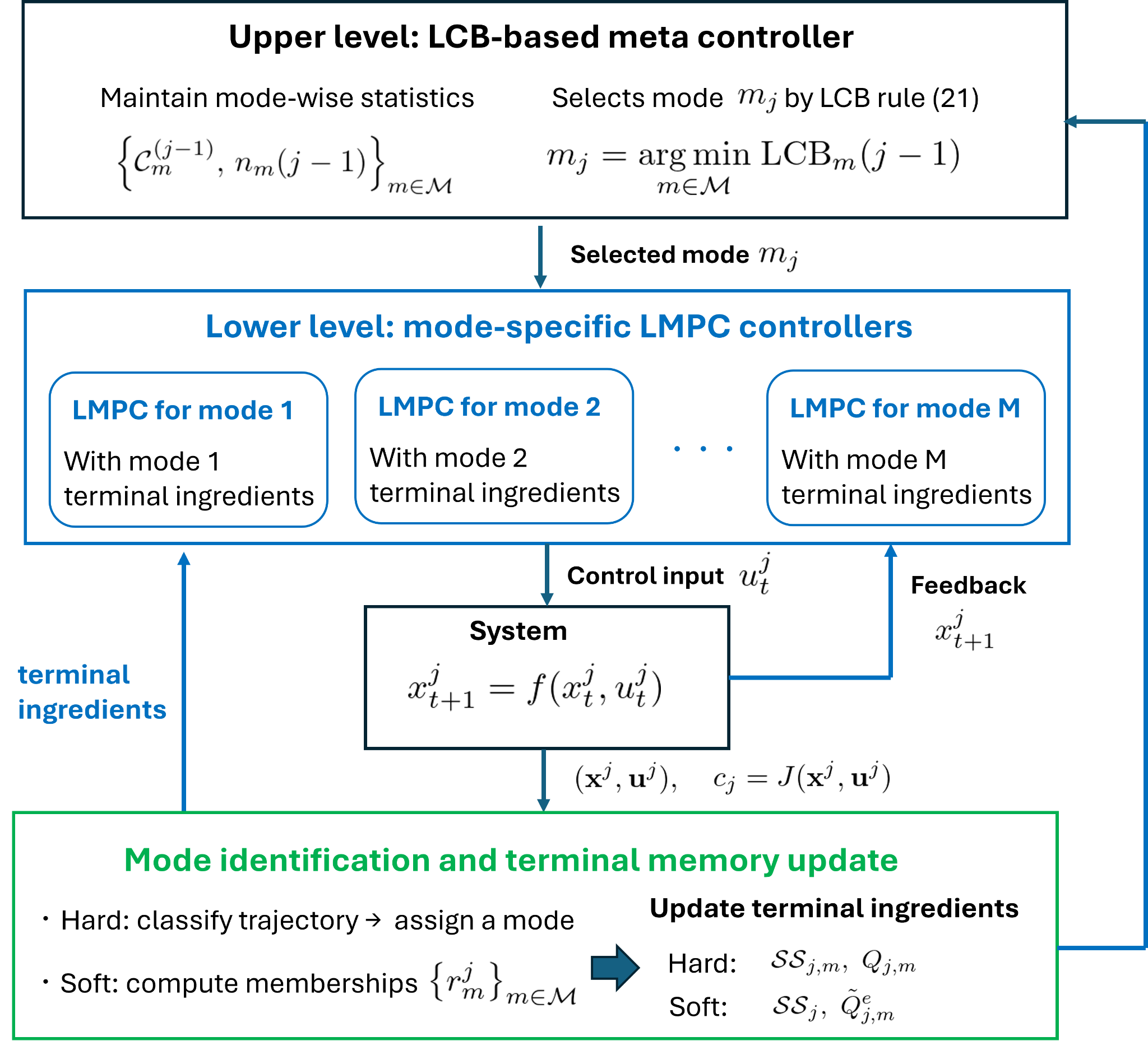}
    \caption{The proposed MM-LMPC architecture.}
    \label{fig:proposed}
\end{figure}

\begin{algorithm}[t]
\DontPrintSemicolon
\caption{Generalized MM-LMPC}
\label{alg:mm-lmpc-general}

\KwIn{$x_S$, $\mathcal{D}_0$, $J_{\max}$, $\kappa$, and
$\sigma\in\{\mathrm{hard},\mathrm{soft}\}$}

Identify ${M}_0$ and the mode information from $\mathcal{D}_0$
as described in Subsection~\ref{subsec:mode_identification}\;
Initialize the mode statistics and terminal ingredients\;

\For{$j=1$ \KwTo $J_{\max}$}{
    Compute $\hat c_m^{(j-1)}=\min\mathcal{C}_m^{(j-1)}$ and select
    $m_j\in{M}_{j-1}$ using~\eqref{eq:LCB}
    as described in Subsection~\ref{subsec:meta_controller}\;

    Set $x_0^j=x_S$ and $t\leftarrow0$\;
    \While{$x_t^j\neq x_F$}{
        Solve \eqref{eq:MPC2} or~\eqref{eq:MPC_soft_offset_np}
        for $m_j$ using the iteration-$(j-1)$ terminal ingredients
        as described in Subsection~\ref{subsec:mode_specific_lmpc}\;
        Apply $u_t^j=u_{t|t}^{j,*}$, observe $x_{t+1}^j$, and
        set $t\leftarrow t+1$\;
    }

    Store $(\mathbf{x}^j,\mathbf{u}^j)$, set
    $c_j=J(\mathbf{x}^j,\mathbf{u}^j)$, and update $\mathcal{D}_j$\;
    Update ${M}_j$ and the mode information from $\mathcal{D}_j$
    according to $\sigma$
    as described in Subsection~\ref{subsec:mode_identification}\;
    Update the mode statistics and terminal ingredients\;
}
\end{algorithm}
To address the limitation of standard LMPC discussed in the previous section, we propose
\emph{Multi-Modal LMPC} (MM-LMPC). Instead of pooling all trajectories into
a single terminal memory, MM-LMPC identifies multiple solution modes and
constructs mode-dependent terminal ingredients to enable mode-aware improvement across repeated executions.

The proposed framework combines a two-level control architecture with a
mode-identification and terminal-memory update module. This module assigns mode information to the collected trajectories and uses it to update the corresponding terminal ingredients. Using this information, the
upper-level LCB-based meta-controller determines the mode to be considered at
each global iteration, and the lower-level mode-specific LMPC controller
performs the corresponding receding-horizon control. The trajectory obtained
from the resulting closed-loop execution is subsequently used to update the
mode information and terminal memory.

For each global iteration $j$, let
$\mathbf{x}^j:=\{x_t^j\}_{t=0}^{T_j}$ and
$\mathbf{u}^j:=\{u_t^j\}_{t=0}^{T_j-1}$
denote the resulting closed-loop state and input sequences, respectively,
and define the realized cost as
$c_j:=J(\mathbf{x}^j,\mathbf{u}^j)$.
The overall architecture is illustrated in
Fig.~\ref{fig:proposed}.
Since MM-LMPC constructs its terminal ingredients from previously observed
feasible trajectories, we first impose the following assumption.

\begin{assumption}[Initial Feasible Trajectories]
\label{assum:initial_trajectory}
The initial dataset $\mathcal{D}_0$ is nonempty, and every trajectory
$(\mathbf{x}^i,\mathbf{u}^i)\in\mathcal{D}_0$ is a complete feasible
closed-loop trajectory from $x_S$ to $x_F$. In particular, it satisfies
\begin{align}
    x_0^i=x_S,\qquad x_{T_i}^i=x_F,
\end{align}
as well as the system dynamics and all state and input constraints.
\end{assumption}

The following subsections describe the components of MM-LMPC. For ease
of definition, their presentation order differs from the execution order
summarized in Algorithm~\ref{alg:mm-lmpc-general}.

\subsection{Mode Identification}\label{subsec:mode_identification}
The solution modes introduced in Section~\ref{sec:problem}, together with the association of stored trajectories with these modes, are specified using task-dependent knowledge or inferred from trajectory data using classification, clustering, or probabilistic modeling methods, such as neural-network classifiers, $k$-means, DBSCAN~\cite{DBSCAN}, or Gaussian mixture models~\cite{GMM}. For each stored trajectory $(\mathbf{x}^i,\mathbf{u}^i)$, the mode-identification step provides either a hard label or a membership vector
\begin{align}
\bm{r}_i = (r_{i,1},\dots,r_{i,M}),
\end{align}
where $r_{i,m}\in[0,1]$ represents the degree of association between trajectory
$i$ and mode $m$.

The mode labels and membership values are then used differently in the two formulations introduced in the next subsection.
The proposed framework is not tied to a particular identification method, provided that the resulting mode information satisfies the assumptions introduced in the theoretical analysis.

\subsection{Mode-Specific LMPC}\label{subsec:mode_specific_lmpc}

Given a selected mode $m$, MM-LMPC executes iteration $j$ using an LMPC controller whose mode information is incorporated through the terminal ingredients. We consider two complementary designs: the hard-constrained formulation uses a mode-specific terminal constraint and terminal cost, whereas the soft-regularized formulation retains the shared sampled safe set and introduces a mode-dependent regularization term in the terminal cost. For each iteration $j$, let $\mathcal{I}_{j}$ denote the index set of stored closed-loop trajectories available up to iteration $j$.

\subsubsection{Hard-Constrained Mode-Specific LMPC}

Using only the trajectory data associated with mode $m$, we construct a
\emph{mode-specific sampled safe set} $\mathcal{SS}_{j,m}$ and a
\emph{mode-specific value function} $Q_{j,m}(\cdot)$, defined as in
(\ref{eq:SS}) and (\ref{eq:Q}) on the subset of trajectories assigned to mode
$m$.
At time $t$ of iteration $j$, the controller for the selected mode $m$ solves
\begin{align}\label{eq:MPC2}
\min_{\{u_{k|t}\}} \quad &
\sum_{k=t}^{t+N-1} h(x_{k|t},u_{k|t}) + Q_{j-1,m}(x_{t+N|t}) \nonumber\\
\text{s.t.}\quad &
x_{k+1|t}=f(x_{k|t},u_{k|t}), \nonumber\\
& x_{k|t}\in\mathcal{X},\; u_{k|t}\in\mathcal{U}, \nonumber\\
& x_{t+N|t}\in\mathcal{SS}_{j-1,m}, \nonumber\\
& x_{t|t}=x_t^j .
\end{align}
The resulting control inputs are applied in a receding-horizon manner, with the
selected mode $m$ fixed throughout iteration $j$.

\subsubsection{Soft-Regularized Mode-Specific LMPC}
The hard-constrained formulation can be sample-inefficient because, for a selected mode \(m\), it excludes terminal samples from trajectories assigned to other modes. To alleviate this issue, the soft-regularized design retains all past samples in
the shared sampled safe set $\mathcal{SS}_{j-1}$ and biases terminal selection
through a membership-based regularization term.
For each stored trajectory $i\in\mathcal{I}_{j-1}$, let
$r_{i,m}^{(j-1)}\in[0,1]$ denote its degree of association with mode $m$ defined in Subsection~\ref{subsec:mode_identification}
at iteration $j-1$. We define
\begin{align}
\psi_{i,m}^{(j-1)}
:=
\rho\,\phi\!\left(1-r_{i,m}^{(j-1)}\right),
\label{eq:psi_def_offset_np}
\end{align}
where $\rho\geq0$ is a design parameter and $\phi(\cdot)$ is monotonically
increasing. We also define the offset term used in the terminal cost below as
\begin{align}
b_{j-1,m}
:=
\min_{i\in \mathcal{I}_{j-1}} \psi_{i,m}^{(j-1)},
\label{eq:b_def_offset_np}
\end{align}
For a terminal state $x\in\mathcal{SS}_{j-1}$, let
\begin{align}
\mathcal{F}^{j-1}(x)
:=
\{(i,t)\mid i\in\mathcal{I}_{j-1},\; x_t^i=x\}
\end{align}
be the set of stored trajectory-time pairs containing $x$. The mode-dependent
terminal cost is defined as
\begin{align}
&\widetilde{Q}^{e}_{j-1,m}(x)\notag \\
&:=
\min_{(i,t)\in\mathcal{F}^{j-1}(x)}
\left(
\sum_{k=t}^{T_i-1} h(x_k^i,u_k^i)
+
\psi_{i,m}^{(j-1)}
-
b_{j-1,m}
\right).
\label{eq:Qe_tilde_def_np}
\end{align}

Given the mode $m$ selected at the beginning of iteration $j$ and
fixed throughout that iteration, the soft-regularized controller solves
the following problem at each time $t$:
\begin{align}\label{eq:MPC_soft_offset_np}
\min_{\{u^j_{k|t}\}} \quad &
\sum_{k=t}^{t+N-1} h(x^j_{k|t},u^j_{k|t})
+ \widetilde{Q}^{e}_{j-1,m}(x^j_{t+N|t}) \nonumber\\
\text{s.t.}\quad &
x^j_{k+1|t}=f(x^j_{k|t},u^j_{k|t}), \nonumber\\
& x^j_{k|t}\in\mathcal{X},\; u^j_{k|t}\in\mathcal{U}, \nonumber\\
& x^j_{t+N|t}\in \mathcal{SS}_{j-1}, \nonumber\\
& x^j_{t|t}=x^j_t .
\end{align}
For both designs, we
denote by $c_{m,k}$ the total cost of the closed-loop trajectory obtained when
mode $m$ is executed for the $k$-th time.
\subsection{Meta-Controller for Mode Selection}\label{subsec:meta_controller}

The choice of which mode to execute at the beginning of iteration $j+1$ is posed as a \emph{multi-armed bandit} (MAB) problem~\cite{bandit1,bandit2}, where each mode is treated as an arm.
The meta-controller should exploit modes that have already achieved low costs, while still occasionally testing modes that have been selected only a few times.
To achieve this balance, we adopt a \emph{Lower Confidence Bound} (LCB) rule.

Let $M_j$ denote the set of modes represented in the stored data up to iteration $j$.
For each mode $m \in M_j$, let $n_m(j)$ denote the number of times mode $m$ has been selected up to iteration $j$, and let $\mathcal{C}_m^{(j)}$ denote the set of realized costs recorded for mode $m$ up to iteration $j$.
We then define
\begin{align}
\hat c_m^{(j)} := \min\!\left(\mathcal{C}_m^{(j)}\right),
\end{align}
where $\hat c_m^{(j)}$ represents the best cost statistic currently associated with mode $m$.
With these quantities, the mode selected at iteration $j+1$ is given by
\begin{align}\label{eq:LCB}
m_{j+1}
=
\arg\min_{m \in M_j}
\left(
\hat c_m^{(j)}
-
\kappa
{\scriptstyle\sqrt{\tfrac{\log(j+1)}{\max\{1,n_m(j)\}}}}
\right),
\end{align}
where $\kappa > 0$ controls the exploration--exploitation trade-off.

In this rule, the first term favors modes that have already produced low realized costs.
The second term lowers the score of modes with small selection counts, thereby encouraging the controller to revisit under-tested modes.
Thus, the LCB rule selects the mode with the most promising optimistic score, rather than simply choosing the mode with the lowest cost observed so far.
For both the hard and soft formulations, the meta-controller selects the mode according to the same bandit-based rule.

\subsection{Overall Procedure}

The complete MM-LMPC procedure is summarized in Algorithm~1.
Here, $J_{\max}$ denotes the number of global iterations, and
$\sigma \in \{\mathrm{hard},\mathrm{soft}\}$ specifies the terminal design.
Before the iterative executions begin, the initial feasible dataset
$\mathcal{D}_0$ is used to identify the available modes and to initialize
the corresponding mode statistics and terminal ingredients.

At each global iteration $j$, the meta-controller first selects a mode
$m_j$ according to the LCB rule using the performance statistics
accumulated from previous executions.
The selected mode is then fixed throughout the iteration, and the
corresponding mode-specific LMPC problem is solved repeatedly in a
receding-horizon manner until the task is completed.
After the resulting closed-loop trajectory and its realized cost are
recorded, the dataset is augmented, and the mode information, mode
statistics, and terminal ingredients are updated for the next iteration.

\section{Theoretical Analysis}

In this section, we establish the main theoretical results of the proposed
MM-LMPC framework. We first prove recursive feasibility and asymptotic stability for both
formulations, and then analyze the mode-wise evolution of the cost and derive
iteration-wise closed-loop cost bounds under LCB-based mode selection.
For the hard-constrained formulation, we further establish mode-wise cost
convergence, asymptotic best-mode performance, and logarithmic regret.

For a fixed iteration $j$ and selected mode $m$, let $V_{j,m}(x)$ denote the
optimal value of the corresponding finite-horizon problem initialized at
state $x$. Along the closed-loop trajectory of iteration $j$, define
\[
V_t^\ast := V_{j,m}(x_t^j)
=
\sum_{k=t}^{t+N-1}
h\!\left(x_{k|t}^{j,\ast},u_{k|t}^{j,\ast}\right)
+
\mathcal{Q}_{j-1,m}\!\left(x_{t+N|t}^{j,\ast}\right),
\]
where
\[
\mathcal{Q}_{j-1,m}
=
\begin{cases}
Q_{j-1,m},
& \text{hard-constrained},\\
\widetilde Q^{e}_{j-1,m},
& \text{soft-regularized}.
\end{cases}
\]
\begin{theo}[Recursive Feasibility and Asymptotic Stability]
\label{thm:feasibility}
Consider the proposed MM-LMPC scheme with
the hard terminal-constraint formulation~\eqref{eq:MPC2} or the
soft-penalized formulation~\eqref{eq:MPC_soft_offset_np}.
Suppose Assumptions~\ref{assum:system}, \ref{assum:stagecost}, and
\ref{assum:initial_trajectory} hold.
Then, for all iterations \(j\ge1\) and all times \(t\ge0\), the
corresponding finite-horizon optimal control problem is recursively feasible.
Moreover, for each fixed iteration \(j\) and selected mode \(m\), the
equilibrium \(x_F\) is asymptotically stable and the resulting closed-loop
trajectory satisfies
\[
x_t^j\to x_F,\qquad u_t^j\to0
\quad\text{as }t\to\infty.
\]
\end{theo}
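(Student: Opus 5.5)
The plan is to follow the standard LMPC argument of~\cite{iterative1}: a shifted-solution feasibility argument, then use the optimal value $V_t^\ast$ as a Lyapunov function. Both designs are handled at once, and the only design-specific facts are two properties of the terminal ingredients. \emph{(P1) Invariance:} every terminal candidate $x\in\mathcal{SS}_{j-1,m}$ (hard) or $x\in\mathcal{SS}_{j-1}$ (soft) is a stored state $x_s^i$ of some complete feasible trajectory $i$. Hence $x_{s+1}^i$ is again in the same set, with input $u_s^i\in\mathcal{U}$; if $s=T_i$ we use $x_F$ and $u=0$, since $f(x_F,0)=x_F$. \emph{(P2) Decrease:} $\mathcal{Q}_{j-1,m}(x_s^i)\ge h(x_s^i,u_s^i)+\mathcal{Q}_{j-1,m}(x_{s+1}^i)$.

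\textbf{Step 0 (nonempty sets at $j=1$).} By Assumption~\ref{assum:initial_trajectory}, the relevant set is nonempty and contains $x_S$ and $x_F$. For the hard design we need every selected mode $m\in M_{j-1}$ to own at least one complete trajectory, which holds because $M_j$ consists of the modes represented in the stored data. This bookkeeping should be stated explicitly.

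\textbf{Step 1 (feasibility at $t=0$).} Take a stored trajectory $i$ of the relevant set. Apply $u_0^i,\dots,u_{N-1}^i$, padded with $u=0$ at $x_F$ if $T_i<N$. This gives a feasible input sequence from $x_S$ whose terminal state lies in the set.

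\textbf{Step 2 (recursion).} Given the optimal solution at time $t$ with terminal state $x_{t+N|t}^{j,\ast}=x_s^i$, where $(i,s)$ attains the minimum in $\mathcal{Q}_{j-1,m}$, form the shifted candidate $\{u_{t+1|t}^{j,\ast},\dots,u_{t+N-1|t}^{j,\ast},u_s^i\}$. It is feasible at $x_{t+1}^j=x_{t+1|t}^{j,\ast}$, because the model is exact and by (P1).

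\textbf{Step 3 (Lyapunov decrease).} Evaluating the candidate gives
\[
V_{t+1}^\ast\le V_t^\ast-h(x_t^j,u_t^j)+\big[h(x_s^i,u_s^i)+\mathcal{Q}_{j-1,m}(x_{s+1}^i)-\mathcal{Q}_{j-1,m}(x_s^i)\big]\le V_t^\ast-h(x_t^j,u_t^j).
\]
For the hard design the bracket is $\le0$ because the tail cost-to-go of trajectory $i$ from $s+1$ bounds $Q_{j-1,m}(x_{s+1}^i)$ from above. For the soft design, $(i,s+1)\in\mathcal{F}^{j-1}(x_{s+1}^i)$ uses the \emph{same} trajectory index $i$, so the penalty $\psi_{i,m}^{(j-1)}-b_{j-1,m}$ cancels in the bracket. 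This cancellation holds because $\psi$ is indexed by trajectory, not by time, and it is the only place where the soft design needs care. I expect (P2) for the soft design to be the main step to verify, but the per-trajectory penalty makes it go through.

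\textbf{Step 4 (stability).} $V_t^\ast\ge0$ in both designs: $h\ge0$, and $\psi_{i,m}\ge b_{j-1,m}$ by the definition of $b$ as a minimum, so the soft terminal cost is nonnegative. Summing the decrease gives $\sum_t h(x_t^j,u_t^j)\le V_0^\ast<\infty$. Compactness and continuity (Assumption~\ref{assum:system}) together with the positivity in Assumption~\ref{assum:stagecost} then give $x_t^j\to x_F$ and $u_t^j\to0$, exactly as noted after Assumption~\ref{assum:stagecost}.

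For Lyapunov stability, $V_{j,m}(x_F)=0$: keep $u=0$ and terminate at $x_F$, which has cost-to-go $0$, and for the soft design choose the minimizing trajectory index so the penalty offset is $0$. This is the delicate point. If every trajectory containing $x_F$ carries a positive offset, one instead shows that $V$ is bounded below by a positive-definite function of $\|x-x_F\|$ from the stage cost, bounded above near $x_F$, and strictly decreasing. Then one argues asymptotic stability via the standard Lyapunov argument of~\cite{iterative1}, invoking continuity of $f$ and $h$, with $V-\min V$ used if needed.
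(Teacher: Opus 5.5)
Your proposal is correct and follows essentially the same route as the paper. Both use shifted-candidate feasibility and the optimal value as a Lyapunov function, get nonnegativity of $\widetilde Q^{e}_{j-1,m}$ from the definition of $b_{j-1,m}$, and rely on the key cancellation of $\psi_{i,m}^{(j-1)}-b_{j-1,m}$, which holds because the successor $x_{s+1}^i$ is evaluated along the same stored trajectory $i$. The ``delicate point'' you flag does not actually arise: every stored trajectory terminates at $x_F$, so $(i,T_i)\in\mathcal{F}^{j-1}(x_F)$ for all $i\in\mathcal{I}_{j-1}$ and hence $\widetilde Q^{e}_{j-1,m}(x_F)=\min_{i\in\mathcal{I}_{j-1}}\psi_{i,m}^{(j-1)}-b_{j-1,m}=0$; this is exactly how the paper settles it, so your fallback argument is unnecessary.
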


\begin{proof}
For notational simplicity, we omit the iteration superscript $j$ from the
closed-loop state and input throughout the proof.

In the hard formulation, once a mode is selected, the controller is a standard
LMPC controller with the corresponding mode-specific sampled safe set and
terminal cost. Therefore, recursive feasibility and asymptotic stability follow
from the standard LMPC argument~\cite{iterative1}. In the soft formulation, the
terminal constraint and the stored feasible continuations are unchanged, while
only the terminal cost is modified by the membership-based penalty. Therefore,
recursive feasibility follows from the same LMPC argument, whereas
stability is established by the additional argument below.

\emph{Asymptotic stability of the soft-regularized formulation.}
We use the optimal value function $V_{j,m}$ associated with
\eqref{eq:MPC_soft_offset_np} as a Lyapunov candidate. By Assumption~\ref{assum:stagecost} and the definition of the
soft-regularized terminal cost, we have
\begin{align}
\widetilde Q^{e}_{j-1,m}(x)\ge0,
\qquad
\forall x\in\mathcal{SS}_{j-1}.
\end{align}
Hence, $V_t^*\ge0$ for every feasible state. Moreover, since each stored
feasible trajectory reaches $x_F$ and the equilibrium can be extended by
applying $u=0$, the remaining cost from $x_F$ is zero. Therefore,
\begin{align}
\widetilde Q^{e}_{j-1,m}(x_F)
&=
\min_{i\in\mathcal I_{j-1}}\psi_{i,m}^{(j-1)}-b_{j-1,m} \\
&=0 ,
\end{align}
where we used the definition of $b_{j-1,m}$. Since the zero input keeps the
system at $x_F$, it follows that $V_{j,m}(x_F)=0$. If $x_t\neq x_F$, then the
first stage cost is positive by Assumption~\ref{assum:stagecost}, while all
remaining cost terms are nonnegative. Thus, $V_t^*>0$ for $x_t\neq x_F$.

We next show that $V_t^*$ decreases along the closed-loop trajectory. Let
\[
\{u^{*}_{k|t}\}_{k=t}^{t+N-1},
\qquad
\{x^{*}_{k|t}\}_{k=t}^{t+N}
\]
be an optimal solution of~\eqref{eq:MPC_soft_offset_np} at time $t$. 
Since
$x^{*}_{t+N|t}\in\mathcal{SS}_{j-1}$, choose a stored trajectory-time
pair $(\bar i,\ell)\in\mathcal{F}^{j-1}(x^{*}_{t+N|t})$ that attains the
minimum in the definition of
$\widetilde Q^{e}_{j-1,m}(x^{*}_{t+N|t})$. Then,
\begin{align}
x_{\ell}^{\bar i}
=
x^{*}_{t+N|t}
\end{align}
and
\begin{align}
\widetilde Q^{e}_{j-1,m}(x_{\ell}^{\bar i})
=
\sum_{k=\ell}^{T_{\bar i}-1}
h(x_k^{\bar i},u_k^{\bar i})
+\psi_{\bar i,m}-b_{j-1,m}.
\end{align}
Using the same stored trajectory to evaluate the terminal cost at the successor
stored state gives
\begin{align}
\widetilde Q^{e}_{j-1,m}(x^{\bar i}_{\ell+1})
&\le
\sum_{k=\ell+1}^{T_{\bar i}-1} h(x_k^{\bar i},u_k^{\bar i})
+\psi_{\bar i,m}^{(j-1)}-b_{j-1,m} \nonumber\\
&=
\sum_{k=\ell}^{T_{\bar i}-1} h(x_k^{\bar i},u_k^{\bar i})
+\psi_{\bar i,m}^{(j-1)}-b_{j-1,m}
-h(x_{\ell}^{\bar i},u_{\ell}^{\bar i}) \nonumber\\
&=
\widetilde Q^{e}_{j-1,m}(x_{\ell}^{\bar i})
-h(x_{\ell}^{\bar i},u_{\ell}^{\bar i}).
\label{eq:soft_Q_decrease}
\end{align}
If $x_{\ell}^{\bar i}=x_F$, the same inequality follows from the equilibrium
extension $f(x_F,0)=x_F$ and $h(x_F,0)=0$.

Now evaluate the optimization problem at time $t+1$ using the standard shifted
candidate: the tail of the optimal sequence at time $t$, followed by the stored
continuation input $u_{\ell}^{\bar i}$. This gives
\begin{align}
V_{t+1}^*
&\le
\sum_{k=t+1}^{t+N-1} h(x^{*}_{k|t},u^{*}_{k|t})
+ h(x_{\ell}^{\bar i},u_{\ell}^{\bar i})
+ \widetilde Q^{e}_{j-1,m}(x_{\ell+1}^{\bar i}) \nonumber\\
&\le
\sum_{k=t+1}^{t+N-1} h(x^{*}_{k|t},u^{*}_{k|t})
+ \widetilde Q^{e}_{j-1,m}(x_{\ell}^{\bar i}) \nonumber\\
&=
V_t^* - h(x_t,u_t).
\label{eq:V_decrease_soft}
\end{align}
Consequently, by Assumption~\ref{assum:stagecost}, $V_t^*$ is strictly
decreasing whenever $x_t\neq x_F$ and remains constant at the equilibrium.

Summing~\eqref{eq:V_decrease_soft} from $t=0$ to $T-1$ yields
\begin{align}
\sum_{t=0}^{T-1} h(x_t,u_t)
\le
V_0^* - V_T^*
\le
V_0^* .
\end{align}
Since $h(x_t,u_t)\ge0$, the partial sums are monotone and bounded, and hence
\begin{align}
\sum_{t=0}^{\infty} h(x_t,u_t)<\infty .
\end{align}
Therefore, $h(x_t,u_t)\to0$ as $t\to\infty$. By Assumptions~\ref{assum:system} and \ref{assum:stagecost}, this implies
\begin{align}
(x_t,u_t)\to(x_F,0),
\end{align}
or equivalently, $x_t\to x_F$ and $u_t\to 0$. Furthermore,
continuity of $V_{j,m}$ can be established as in the standard MPC
stability analysis~\cite{MPC,iterative1}. Therefore, the positive
definiteness and continuity of $V_{j,m}$, together with the decrease
condition~\eqref{eq:V_decrease_soft}, establish Lyapunov stability of $x_F$. Combined with
the above convergence result, this proves asymptotic stability of the
soft-regularized formulation.
\end{proof}

\begin{remark}
The present analysis is restricted to the nominal disturbance-free setting and extending MM-LMPC to systems with disturbances is outside the scope of this paper. In particular, the exact terminal-state matching used here is generally not preserved under disturbances. Nevertheless, related extensions have been studied in the LMPC literature for uncertain and stochastic settings, and similar ideas could potentially be incorporated into the present mode-dependent framework; see, e.g.,~\cite{iterative3, iterative4, self2}.
\end{remark}

To analyze the mode-wise evolution of the iteration cost, we now introduce the additional assumptions used in the subsequent results.
\begin{assumption}[Finiteness of Modes]
\label{assum:finiteness}
Let $M_j$ denote the set of modes discovered up to iteration $j$.
We assume that there exists a finite set $M_\infty$ such that
\[
M_j = M_\infty
\qquad
\text{for all sufficiently large } j.
\]
That is, the set of discovered modes eventually stabilizes and contains only finitely many elements.
\end{assumption}

\begin{assumption}[Eventual Consistency of Hard Mode Classification]
\label{assum:hard_consistency}
There exists a finite iteration index $J_c\in\mathbb{N}$ such that,
for every mode $m\in M_\infty$ and every iteration $j>J_c$, if the
hard MM-LMPC controller is executed with selected mode $m$, then the
resulting closed-loop trajectory is classified into the same mode $m$.
\end{assumption}

Assumption~\ref{assum:finiteness} means that the task has only finitely
many relevant motion modes. This excludes cases where new qualitatively
different modes keep appearing indefinitely.
Assumption~\ref{assum:hard_consistency} ensures that, after finitely
many iterations, the hard mode-specific terminal ingredients are
updated consistently.

The following lemma characterizes the mode-wise evolution of the
iteration cost. 
\begin{lemma}[Intra-Mode Cost Evolution]
\label{lem:intra_mode_cost_evolution}
Suppose Assumptions~\ref{assum:system}--\ref{assum:finiteness} hold.
For the hard formulation, suppose additionally
Assumption~\ref{assum:hard_consistency} holds.
Fix any mode $m \in M_\infty$, let $j_{m,k}$ denote the global iteration
index of the $k$-th execution of mode $m$, and define
\begin{align}
c_{m,k}
:=
\sum_{t=0}^{T_{j_{m,k}}-1}
h\!\left(x_t^{j_{m,k}},u_t^{j_{m,k}}\right).
\end{align}
Then the following statements hold.

\textit{(i) Hard formulation.}
For every $k\ge1$ such that $j_{m,k}>J_c$,
\begin{align}
c_{m,k+1} \le c_{m,k}.
\label{eq:hard_intra_mode_nonincreasing_unified}
\end{align}

\textit{(ii) Soft formulation.}
For every $k \ge 1$ and every $\ell \in \{1,\dots,k\}$,
\begin{align}
c_{m,k+1}
\le
c_{m,\ell}
+
\psi_{j_{m,\ell},m}^{(j_{m,k+1}-1)}
-
b_{j_{m,k+1}-1,m}.
\label{eq:soft_intra_mode_bounded_increase_unified}
\end{align}

\end{lemma}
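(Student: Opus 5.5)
Both parts follow the standard LMPC cost-improvement argument of~\cite{iterative1}: chain the value-function decrease from Theorem~\ref{thm:feasibility} along the closed loop, then bound the initial value $V_0^*$ by the cost of a stored trajectory used as a feasible candidate.

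\emph{Common step.} Fix $m$ and $k$, and set $j:=j_{m,k+1}$. Summing the decrease inequality $V_{t+1}^*\le V_t^*-h(x_t,u_t)$ from Theorem~\ref{thm:feasibility} (the hard case is standard LMPC, the soft case is~\eqref{eq:V_decrease_soft}) over $t=0,\dots,T_j-1$ gives
\begin{align}
c_{m,k+1}\le V_0^*-V_{T_j}^*\le V_0^* ,
\end{align}
using $V^*\ge0$. It then suffices to upper-bound $V_0^*=V_{j,m}(x_S)$ by exhibiting a feasible candidate at $t=0$. As candidate we take the first $N$ inputs of a stored trajectory $i$, extended by $u=0$ at $x_F$ if $T_i<N$. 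Its predicted states satisfy all constraints, and its terminal state is $x^i_N$ (or $x_F$), which lies in the relevant sampled safe set. Evaluating $Q$ or $\widetilde Q^e$ at $x^i_N$ with the particular pair $(i,N)\in\mathcal F^{j-1}(x^i_N)$ bounds the terminal cost by the tail cost of trajectory $i$ plus any penalty. Hence the candidate's objective is at most the full cost of trajectory $i$ plus that penalty.

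\emph{(i) Hard.} Take $i=j_{m,k}$. We need this trajectory to belong to mode $m$'s data at iteration $j-1\ge j_{m,k}$. This holds because $j_{m,k}>J_c$, so by Assumption~\ref{assum:hard_consistency} the trajectory is classified into mode $m$. Its states are therefore in $\mathcal{SS}_{j-1,m}$, and the minimal cost-to-go $Q_{j-1,m}$ at $x^i_N$ is at most its tail cost. We also need the labeling to be stable after assignment, so that the trajectory is not later reassigned. I would read this as part of the consistency assumption, or note that labels of stored trajectories are kept fixed in the hard design. This yields $V_0^*\le c_{m,k}$ and hence $c_{m,k+1}\le c_{m,k}$.

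\emph{(ii) Soft.} Take $i=j_{m,\ell}$ for any $\ell\le k$. It is stored in $\mathcal I_{j-1}$, since $j_{m,\ell}\le j_{m,k}<j$. No classification assumption is needed because the safe set is shared. Evaluating $\widetilde Q^e_{j-1,m}$ at $x^i_N$ with pair $(i,N)$ gives the tail cost plus $\psi^{(j-1)}_{i,m}-b_{j-1,m}$. Therefore
\begin{align}
V_0^*\le c_{m,\ell}+\psi^{(j_{m,k+1}-1)}_{j_{m,\ell},m}-b_{j_{m,k+1}-1,m},
\end{align}
which is the claimed bound.

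\emph{Expected obstacle.} The delicate point is the hard case. We must ensure that the trajectory from the previous mode-$m$ execution actually resides in mode $m$'s terminal memory at iteration $j-1$, since mode identification may relabel data over time. This is where Assumption~\ref{assum:hard_consistency} and the condition $j_{m,k}>J_c$ are used. The edge case $T_i<N$, handled by the equilibrium extension, is routine.
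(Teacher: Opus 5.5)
Your proposal is correct and follows essentially the same argument as the paper: you telescope the one-step value decrease to get $c_{m,k+1}\le V_0^*$, then bound $V_0^*$ by taking the first $N$ steps of the stored trajectory $j_{m,k}$ (hard case) or $j_{m,\ell}$ (soft case) as a feasible candidate, and you use Assumption~\ref{assum:hard_consistency} in the hard case to place that trajectory in $\mathcal{SS}_{j-1,m}$. The only differences are that you write out the hard case, which the paper delegates to the standard LMPC monotonicity argument, and that you explicitly handle label stability after assignment and the edge case $T_i<N$, both of which the paper leaves implicit.
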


\begin{proof}
We prove the two statements separately.

\textit{(i) Hard formulation.}
For any $k\ge1$ such that $j_{m,k}>J_c$,
Assumption~\ref{assum:hard_consistency} ensures that the trajectory
generated at iteration $j_{m,k}$ is classified into mode $m$.
Therefore, this trajectory is incorporated into the mode-specific sampled
safe set and terminal cost used at iteration $j_{m,k+1}$.
Hence, the standard LMPC monotonicity argument in
Theorem~2 of~\cite{iterative1} applies mode-wise:
\begin{align}
c_{m,k+1} \le c_{m,k}.
\end{align}

\textit{(ii) Soft formulation.}
Fix $k \ge 1$ and an arbitrary $\ell \in \{1,\dots,k\}$.
Set
\begin{align}
j^\ell := j_{m,\ell},
\qquad
j^+ := j_{m,k+1}.
\end{align}
Let $V_t^{j^+}$ denote the optimal value of
\eqref{eq:MPC_soft_offset_np} at time $t$ of iteration $j^+$.

To compare the cost of the trajectory generated at iteration $j^\ell$
with the optimal value at iteration $j^+$, we separate its first $N$
stage costs from the remaining cost-to-go and include the penalty offset
used at iteration $j^+$:
\begin{align}
&c_{m,\ell}
+\psi_{j^\ell,m}^{(j^+-1)}
-b_{j^+-1,m}
\notag\\
&=
\sum_{t=0}^{N-1} h(x_t^{j^\ell},u_t^{j^\ell})
\notag\\
&\qquad+
\Bigl(
\sum_{t=N}^{T_{j^{\ell}}-1}
h(x_t^{j^\ell},u_t^{j^\ell})
+\psi_{j^\ell,m}^{(j^+-1)}
-b_{j^+-1,m}
\Bigr).
\label{eq:split_cost_unified}
\end{align}
Since $x_N^{j^\ell}\in\mathcal{SS}_{j^+-1}$ and
$\widetilde Q^{e}_{j^+-1,m}$ is defined by minimizing over stored
trajectories in \eqref{eq:Qe_tilde_def_np}, evaluating it on the
stored trajectory $j^\ell$ with index $t=N$ yields
\begin{align}
\widetilde Q^{e}_{j^+-1,m}(x_N^{j^\ell})
\le
\sum_{t=N}^{T_{j^\ell}-1}
h(x_t^{j^\ell},u_t^{j^\ell})
+\psi_{j^\ell,m}^{(j^+-1)}
-b_{j^+-1,m}.
\label{eq:Qe_prev_bound_unified}
\end{align}
Combining \eqref{eq:split_cost_unified} and
\eqref{eq:Qe_prev_bound_unified} gives
\begin{align}
&c_{m,\ell}
+\psi_{j^\ell,m}^{(j^+-1)}
-b_{j^+-1,m}
\notag\\
&\qquad\quad\ge
\sum_{t=0}^{N-1} h(x_t^{j^\ell},u_t^{j^\ell})
+
\widetilde Q^{e}_{j^+-1,m}(x_N^{j^\ell}).
\label{eq:bound_candidate_cost_unified}
\end{align}
The right-hand side of
\eqref{eq:bound_candidate_cost_unified} is the objective value of
\eqref{eq:MPC_soft_offset_np} evaluated using the first $N$ steps of
the stored trajectory
$(\mathbf{x}^{j^\ell},\mathbf{u}^{j^\ell})$
as a feasible candidate at time $0$ of iteration $j^+$.
Therefore, by optimality of $V_0^{j^+}$,
\begin{align}
c_{m,\ell}
+\psi_{j^\ell,m}^{(j^+-1)}
-b_{j^+-1,m}
\ge
V_0^{j^+}.
\label{eq:36m_final_unified}
\end{align}

By Theorem~\ref{thm:feasibility}, within the fixed iteration $j^+$,
the optimal value satisfies the one-step decrease
\begin{align}
V_{t+1}^{j^+}
\le
V_t^{j^+}-h(x_t^{j^+},u_t^{j^+}),
\qquad
\forall t\ge0.
\label{eq:Vdec_unified}
\end{align}
Iterating \eqref{eq:Vdec_unified} yields, for any $T\ge1$,
\begin{align}
V_0^{j^+}
\ge
\sum_{t=0}^{T-1}h(x_t^{j^+},u_t^{j^+})
+
V_T^{j^+}.
\label{eq:telescoping_unified}
\end{align}
From $V_T^{j^+}\ge0$ and taking
$T=T_{j_{m,k+1}}$, we obtain
\begin{align}
V_0^{j^+}
\ge
\sum_{t=0}^{T_{j_{m,k+1}}-1}
h(x_t^{j^+},u_t^{j^+})
=
c_{m,k+1}.
\label{eq:39m_final_unified}
\end{align}

Combining \eqref{eq:36m_final_unified} and
\eqref{eq:39m_final_unified} gives
\begin{align}
c_{m,k+1}
\le
V_0^{j^+}
\le
c_{m,\ell}
+
\psi_{j^\ell,m}^{(j^+-1)}
-
b_{j^+-1,m}.
\end{align}
Since $\ell\in\{1,\dots,k\}$ was arbitrary,
\eqref{eq:soft_intra_mode_bounded_increase_unified} follows.

\end{proof}

For the hard-constrained formulation, the eventual intra-mode
non-increasing cost property in
Lemma~\ref{lem:intra_mode_cost_evolution}(i) leads to the following
asymptotic best-mode performance result.

\begin{theo}[Asymptotic Best-Mode Performance]
\label{thm:best_mode_performance}
Consider the hard-constrained formulation under the assumptions of
Lemma~\ref{lem:intra_mode_cost_evolution}(i), with mode selection given
by the LCB rule~\eqref{eq:LCB}. Then, for every $m\in M_\infty$, the
mode-wise cost sequence $\{c_{m,k}\}_{k\ge1}$ converges to a finite
limit $c_m^*$. Moreover,
\begin{align}
\liminf_{j\to\infty}c_j
=
c^*,
\qquad
c^*
:=
\min_{m\in M_\infty}c_m^*.
\label{eq:asymptotic_best_mode}
\end{align}
\end{theo}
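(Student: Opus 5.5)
The proof will go in three steps: (a) each mode-wise sequence converges, (b) the LCB rule~\eqref{eq:LCB} selects every mode in $M_\infty$ infinitely often, and (c) the identity~\eqref{eq:asymptotic_best_mode} follows by comparing $c_j$ with the limits $c_m^*$. Throughout, I work only at iterations $j$ large enough that $M_j=M_\infty$ (Assumption~\ref{assum:finiteness}) and $j>J_c$ (Assumption~\ref{assum:hard_consistency}).

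\emph{Step (a): convergence.} Fix $m\in M_\infty$ and suppose mode $m$ is executed infinitely often. Lemma~\ref{lem:intra_mode_cost_evolution}(i) gives $c_{m,k+1}\le c_{m,k}$ for every $k$ with $j_{m,k}>J_c$, so the sequence is eventually nonincreasing. It is bounded below by $0$, since $h\ge0$. Hence it converges to a finite limit $c_m^*$, and eventually $c_{m,k}\ge c_m^*$ (monotone convergence from above). If mode $m$ were executed only finitely often, the sequence would be finite and convergence would be vacuous. Step (b) rules this case out, which is also needed for the liminf statement.

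\emph{Step (b): infinite exploration.} This is the main obstacle. I will argue by contradiction. Suppose mode $m\in M_\infty$ is selected only finitely often, so that $n_m(j)=\bar n$ for all large $j$. Its best-cost statistic $\hat c_m^{(j)}$ is then frozen at some finite value $\bar C$. This requires $\mathcal C_m$ to be nonempty, i.e.\ that $m$ carries at least one recorded cost, for instance from $\mathcal{D}_0$; I would state this explicitly if the paper does not already guarantee it. Mode $m$'s score therefore satisfies
\begin{align}
\hat c_m^{(j)}-\kappa\sqrt{\tfrac{\log(j+1)}{\bar n}} = \bar C-\kappa\sqrt{\tfrac{\log(j+1)}{\bar n}} .
\end{align}
Since $M_\infty$ is finite, some mode $m'$ is selected infinitely often, so $n_{m'}(j)\ge \bar n+1$ eventually. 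Because $\hat c_{m'}^{(j)}\ge0$, the score of $m'$ is at least $-\kappa\sqrt{\log(j+1)/n_{m'}(j)}\ge -\kappa\sqrt{\log(j+1)/(\bar n+1)}$. The score of $m'$ minus that of $m$ is therefore at least
\begin{align}
\kappa\sqrt{\log(j+1)}\Bigl(\tfrac{1}{\sqrt{\bar n}}-\tfrac{1}{\sqrt{\bar n+1}}\Bigr)-\bar C \to\infty .
\end{align}
Thus, for all sufficiently large $j$, $m'$ cannot attain the minimum in~\eqref{eq:LCB}, contradicting that $m'$ is selected infinitely often. Hence every mode in $M_\infty$ is selected infinitely often, and Step (a) applies to each of them.

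\emph{Step (c): the liminf.} For the lower bound, take $j$ large. Then $c_j=c_{m_j,k}$ for some $k$ with $j_{m_j,k}=j>J_c$. By Step (a), $c_j\ge c_{m_j}^*\ge c^*$, at least once $k$ is past the finite initial segment of each of the finitely many modes. Hence $\liminf_{j\to\infty} c_j\ge c^*$. For the upper bound, let $m^\star$ attain $c^*=\min_{m\in M_\infty}c_m^*$. By Step (b), $m^\star$ is executed infinitely often, so along the global indices $j_{m^\star,k}$ we have $c_{j_{m^\star,k}}=c_{m^\star,k}\to c^*$. This subsequence gives $\liminf_{j\to\infty} c_j\le c^*$, and together with the lower bound it yields~\eqref{eq:asymptotic_best_mode}.
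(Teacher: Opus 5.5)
Your proof is correct and follows the same route as the paper's: every mode in $M_\infty$ is selected infinitely often under the LCB rule, the eventual monotonicity from Lemma~\ref{lem:intra_mode_cost_evolution}(i) gives the limits $c_m^*$, and the two liminf inequalities come from the subsequence of $m^\star$ and from $c_{m,k}\ge c_m^*\ge c^*$ for large $k$. Your Step (b) makes rigorous the paper's one-line appeal to the growing exploration term by comparing against a mode whose count diverges, but it needs one small fix: if the frozen count is $\bar n=0$, the rule uses $\max\{1,\bar n\}=1$, so comparing with $n_{m'}(j)\ge\bar n+1=1$ gives a zero coefficient; use $n_{m'}(j)\to\infty$ to compare against $n_{m'}(j)\ge\max\{1,\bar n\}+1$ instead.
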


\begin{proof}
Under the LCB rule, every mode in $M_\infty$ is selected infinitely
often~\cite{bandit1,bandit2}. Indeed, if a mode were selected only
finitely many times, its selection count would remain bounded, and the
negative exploration term in~\eqref{eq:LCB} would eventually force its
reselection. Hence, $n_m(j)\to\infty$ and thus $j_{m,k}\to\infty$ for
every $m\in M_\infty$.

Consequently, for each mode $m$, there exists a finite $K_m$ such that
$j_{m,k}>J_c$ for all $k\ge K_m$. By
Lemma~\ref{lem:intra_mode_cost_evolution}(i), the tail sequence
$\{c_{m,k}\}_{k\ge K_m}$ is non-increasing. Since the costs are
nonnegative, it converges to a finite limit $c_m^*$, and the full
sequence converges to the same limit.

Let $m^*\in\arg\min_{m\in M_\infty}c_m^*$. Since mode $m^*$ is selected
infinitely often, $c_{j_{m^*,k}}=c_{m^*,k}\to c^*$, which implies
$\liminf_{j\to\infty}c_j\le c^*$.
On the other hand, the eventual non-increasing property implies
$c_{m,k}\ge c_m^*\ge c^*$ for all sufficiently large $k$. Since
$M_\infty$ is finite, it follows that $c_j\ge c^*$ for all sufficiently
large $j$, and hence $\liminf_{j\to\infty}c_j\ge c^*$. Combining the two
inequalities proves~\eqref{eq:asymptotic_best_mode}.
\end{proof}

While the preceding theorem establishes mode-wise cost convergence and
asymptotic best-mode performance, the following results analyze finite-time
behavior. For the hard formulation, we strengthen
Assumption~\ref{assum:hard_consistency} by taking $J_c=0$.

\begin{theo}[Cost Bound under LCB Selection]
\label{thm:cost_increase_bound}
Define
\[
\begin{aligned}
\hat c_m^{(j-1)}
&:=\min\mathcal C_m^{(j-1)},\\
m^\star
&\in\arg\min_{m\in M_{j-1}}\hat c_m^{(j-1)},\\
c_{\mathrm{best}}^{(j-1)}
&:=\hat c_{m^\star}^{(j-1)}.
\end{aligned}
\]
Suppose that the LCB policy selects mode $m_j$ at iteration $j$, and
let $c_j$ be the resulting iteration cost. Define
\begin{equation}
\Delta_j
:=
\begin{cases}
0,
& \text{hard-constrained},\\
\psi_{i_j^\star,m_j}^{(j-1)}-b_{j-1,m_j},
& \text{soft-regularized},
\end{cases}
\label{eq:Delta_j}
\end{equation}
where, in the soft-regularized formulation, $i_j^\star$ is a stored
trajectory index associated with mode $m_j$ that attains
$\hat c_{m_j}^{(j-1)}$. Then
\begin{align}
c_j
\le {}&
c_{\mathrm{best}}^{(j-1)}
+
\kappa
\sqrt{
\frac{\log j}
{\max\{1,n_{m_j}(j-1)\}}
}
\nonumber\\
&-
\kappa
\sqrt{
\frac{\log j}
{\max\{1,n_{m^\star}(j-1)\}}
}
+
\Delta_j .
\label{eq:lcb-cost-bound-final}
\end{align}
\end{theo}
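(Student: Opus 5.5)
The bound splits into two inequalities. First, the intra-mode bound from Lemma~\ref{lem:intra_mode_cost_evolution} gives $c_j\le \hat c_{m_j}^{(j-1)}+\Delta_j$. Second, the LCB selection rule~\eqref{eq:LCB} converts $\hat c_{m_j}^{(j-1)}$ into $c_{\mathrm{best}}^{(j-1)}$ plus the difference of exploration bonuses. Adding the two yields~\eqref{eq:lcb-cost-bound-final}.

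\emph{Step 1 (intra-mode bound).} Let $i_j^\star$ be a stored trajectory of mode $m_j$ that attains $\hat c_{m_j}^{(j-1)}=\min\mathcal C_{m_j}^{(j-1)}$; it was generated at some earlier iteration $j_{m_j,\ell}$ with $\ell\le n_{m_j}(j-1)$. In the soft case, apply Lemma~\ref{lem:intra_mode_cost_evolution}(ii) with $k+1$ corresponding to the current execution $j$ and this $\ell$. This gives $c_j\le c_{m_j,\ell}+\psi_{i_j^\star,m_j}^{(j-1)}-b_{j-1,m_j}=\hat c_{m_j}^{(j-1)}+\Delta_j$.

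In the hard case, with $J_c=0$, Lemma~\ref{lem:intra_mode_cost_evolution}(i) makes the mode-$m_j$ cost sequence non-increasing, so $c_j\le c_{m_j,n_{m_j}(j-1)}=\hat c_{m_j}^{(j-1)}$. Alternatively, one can argue directly as in the soft proof: the trajectory $i_j^\star$ lies in the mode-$m_j$ safe set, so its first $N$ steps are a feasible candidate at time $0$, which gives $V_0\le \hat c_{m_j}^{(j-1)}$. The telescoped decrease from Theorem~\ref{thm:feasibility} then gives $c_j\le V_0$.

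The main obstacle is the case where mode $m_j$ has never been executed, i.e.\ $n_{m_j}(j-1)=0$ and $\mathcal C_{m_j}^{(j-1)}$ consists only of initial-data costs. There Lemma~\ref{lem:intra_mode_cost_evolution} does not apply literally. I will handle it with the direct candidate argument above: the initial trajectory of mode $m_j$ is stored in $\mathcal{SS}_{j-1,m_j}$ (respectively $\mathcal{SS}_{j-1}$), and its truncation is feasible with objective at most its cost, plus the penalty offset in the soft case. So the same inequality holds whenever $\hat c_{m_j}^{(j-1)}$ is the recorded cost of some stored trajectory of mode $m_j$.

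\emph{Step 2 (LCB comparison).} Since $m_j$ minimizes the LCB score over $M_{j-1}$ and $m^\star\in M_{j-1}$,
\[
\hat c_{m_j}^{(j-1)}-\kappa\sqrt{\tfrac{\log j}{\max\{1,n_{m_j}(j-1)\}}}
\le
c_{\mathrm{best}}^{(j-1)}-\kappa\sqrt{\tfrac{\log j}{\max\{1,n_{m^\star}(j-1)\}}}.
\]
Rearranging bounds $\hat c_{m_j}^{(j-1)}$ by $c_{\mathrm{best}}^{(j-1)}$ plus the $m_j$ bonus minus the $m^\star$ bonus. Substituting this into Step 1 completes the proof.
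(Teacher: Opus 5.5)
Your proposal is correct and follows the paper's proof essentially step for step. The paper also first obtains $c_j\le \hat c_{m_j}^{(j-1)}+\Delta_j$ from Lemma~\ref{lem:intra_mode_cost_evolution}: part (i) with $J_c=0$ in the hard case, and part (ii) with the execution index attaining $\hat c_{m_j}^{(j-1)}$ in the soft case. It then rearranges the LCB optimality inequality against $m^\star$.

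Your direct feasible-candidate argument covers the case $n_{m_j}(j-1)=0$, where $\hat c_{m_j}^{(j-1)}$ can only come from initial data and the lemma does not literally apply. The paper's proof passes over this case silently, so your treatment is a small but genuine gain in rigor.
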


\begin{proof}
Let $k_j$ be the unique local execution index satisfying
$j=j_{m_j,k_j}$; then $c_j=c_{m_j,k_j}$.

For the hard formulation, since Assumption~\ref{assum:hard_consistency}
holds with $J_c=0$, Lemma~\ref{lem:intra_mode_cost_evolution}(i) implies
that the mode-wise cost is non-increasing from the first execution. Hence,
\begin{align}
c_j
\le
\hat c_{m_j}^{(j-1)}.
\end{align}

For the soft formulation, choose $\ell_j^\star$ such that
$c_{m_j,\ell_j^\star}=\hat c_{m_j}^{(j-1)}$, and let
$i_j^\star:=j_{m_j,\ell_j^\star}$. Then
Lemma~\ref{lem:intra_mode_cost_evolution}(ii) gives
\begin{align}
c_j
=
c_{m_j,k_j}
&\le
c_{m_j,\ell_j^\star}
+
\psi_{i_j^\star,m_j}^{(j-1)}
-
b_{j-1,m_j}
\notag\\
&=
\hat c_{m_j}^{(j-1)}
+
\Delta_j.
\end{align}

Therefore, by the definition of $\Delta_j$, both formulations satisfy
\begin{align}
c_j
\le
\hat c_{m_j}^{(j-1)}
+
\Delta_j.
\label{eq:lcb-bound-step1-fixed}
\end{align}

Because $m_j$ is selected by the LCB rule,
\begin{align}
&\hat c_{m_j}^{(j-1)}
-
\kappa
\sqrt{
\frac{\log j}
{\max\{1,n_{m_j}(j-1)\}}
}
\notag\\
&\qquad\le
\hat c_{m^\star}^{(j-1)}
-
\kappa
\sqrt{
\frac{\log j}
{\max\{1,n_{m^\star}(j-1)\}}
}.
\end{align}
Using
$\hat c_{m^\star}^{(j-1)}=c_{\mathrm{best}}^{(j-1)}$
and rearranging yields
\begin{align}
\hat c_{m_j}^{(j-1)}
\le {}&
c_{\mathrm{best}}^{(j-1)}
+
\kappa
\sqrt{
\frac{\log j}
{\max\{1,n_{m_j}(j-1)\}}
}
\notag\\
&-
\kappa
\sqrt{
\frac{\log j}
{\max\{1,n_{m^\star}(j-1)\}}
}.
\end{align}
Combining this inequality with
\eqref{eq:lcb-bound-step1-fixed}
proves~\eqref{eq:lcb-cost-bound-final}.
\end{proof}

We now introduce an additional assumption used in the regret analysis.

\begin{assumption}[Mode-wise Improvement Summability]
\label{assum:conv_rate}
For every mode $m\in M_\infty$,
\begin{align}
\sum_{k=1}^{\infty}
\left(c_{m,k}-c_m^*\right)_+
\le
C_m
<
\infty,
\end{align}
where $c_m^*$ is defined in
Theorem~\ref{thm:best_mode_performance},
$(a)_+:=\max\{a,0\}$, and $C_m$ is a finite constant.
\end{assumption}
Assumption~\ref{assum:conv_rate} strengthens the eventual mode-wise
non-increasing property established in
Lemma~\ref{lem:intra_mode_cost_evolution}(i). It is introduced to ensure
that the cumulative transient suboptimality within each mode remains
uniformly bounded.

For the hard formulation, the eventual intra-mode non-increasing property
in Lemma~\ref{lem:intra_mode_cost_evolution}(i), together with
Assumption~\ref{assum:conv_rate}, allows us to bound both the transient
suboptimality within each mode and the regret induced by the LCB-based
mode selection rule.

\begin{theo}[Logarithmic Regret Bound]
\label{thm:regret}
Consider the hard formulation under
Assumptions~\ref{assum:system}, \ref{assum:stagecost},
\ref{assum:initial_trajectory}, \ref{assum:finiteness},
\ref{assum:hard_consistency}, and \ref{assum:conv_rate},
with $J_c=0$. Let $J_r$ be sufficiently large that
$M_j=M_\infty$ for all $j\ge J_r$ and every mode in $M_\infty$
has been executed at least once by iteration $J_r$.
Then, there exists a finite constant $C$, independent of $T$, such that
\begin{align}
R_T
:=
\sum_{j=1}^{T}(c_j-c^*)
\le
\sum_{m:\,\Delta_m>0}
\frac{4\kappa^2}{\Delta_m}\log T
+
C
=
O(\log T),
\end{align}
where
$c^*:=\min_{m\in M_\infty}c_m^*$ and
$\Delta_m:=c_m^*-c^*$.
\end{theo}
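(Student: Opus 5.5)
Decompose the regret mode-wise and split each mode's contribution into an intra-mode transient part and a selection part. Writing $c_j=c_{m_j,k_j}$ and grouping iterations by the selected mode gives
\begin{align*}
R_T=\sum_{m\in M_\infty}\sum_{k=1}^{n_m(T)}\bigl(c_{m,k}-c_m^*\bigr)+\sum_{m\in M_\infty}n_m(T)\,\Delta_m ,
\end{align*}
up to a finite contribution from iterations $j<J_r$ and from modes outside $M_\infty$. That contribution is bounded by a constant because all costs are bounded by $\max_m c_{m,1}$; this follows from Lemma~\ref{lem:intra_mode_cost_evolution}(i) with $J_c=0$ and the finiteness in Assumption~\ref{assum:finiteness}. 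Since $J_c=0$, each sequence $c_{m,k}$ is non-increasing and converges to $c_m^*$, so $c_{m,k}-c_m^*=(c_{m,k}-c_m^*)_+$. Assumption~\ref{assum:conv_rate} then bounds the first double sum by $\sum_m C_m$, uniformly in $T$. What remains is to show that $\mathbb{E}$-free, deterministic counts satisfy $n_m(T)\le 4\kappa^2\log T/\Delta_m^2+O(1)$ for every suboptimal mode.

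For the count bound, fix $m$ with $\Delta_m>0$ and an optimal mode $m^*$ with $c_{m^*}^*=c^*$. Monotonicity gives $\hat c_m^{(j-1)}\ge c_m^*$, because the best cost recorded for mode $m$ is at least its limit. It also gives $\hat c_{m^*}^{(j-1)}\to c^*$, but only at an unquantified rate. I would avoid needing a rate by using the summability assumption: since the terms are non-increasing and summable, the number of $k$ with $c_{m^*,k}-c^*>\Delta_m/2$ is at most $2C_{m^*}/\Delta_m$. Moreover, $m^*$ has been executed at least once by $J_r$. If $m$ is selected at iteration $j$, the LCB inequality gives
\begin{align*}
c_m^*-\kappa\sqrt{\tfrac{\log j}{n_m(j-1)}}\le \hat c_{m^*}^{(j-1)}-\kappa\sqrt{\tfrac{\log j}{n_{m^*}(j-1)}}\le \hat c_{m^*}^{(j-1)} .
\end{align*}
Once $\hat c_{m^*}^{(j-1)}\le c^*+\Delta_m/2$, this forces $n_m(j-1)\le 4\kappa^2\log j/\Delta_m^2\le 4\kappa^2\log T/\Delta_m^2$.

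The obstacle is the iterations before $\hat c_{m^*}$ has come within $\Delta_m/2$ of $c^*$. Because $m^*$ need not be selected during those iterations, they are not directly counted by the $2C_{m^*}/\Delta_m$ bound. I would first try to handle this by a slightly different comparison: drop the $m^*$ exploration term and instead use that $m^*$ is selected infinitely often (as in the proof of Theorem~\ref{thm:best_mode_performance}). Then define $J_m$ as the first iteration after which $\hat c_{m^*}\le c^*+\Delta_m/2$; this is finite and independent of $T$. Selections of $m$ before $J_m$ number at most $J_m$, which is absorbed into $C$.

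Multiplying the count bound by $\Delta_m$ gives $n_m(T)\Delta_m\le 4\kappa^2\log T/\Delta_m+J_m\Delta_m+\Delta_m$. Summing over suboptimal modes and collecting $\sum_m C_m$, $\sum_m(J_m+1)\Delta_m$ and the pre-$J_r$ terms into $C$ gives the stated bound. I expect the main care point to be making sure $J_m$ depends only on the realized (deterministic) run and not on $T$. It does, since the selection sequence is fixed independently of the horizon $T$.
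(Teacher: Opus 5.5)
Your proposal is correct and is essentially the paper's argument. You use the same split into an intra-mode gap summed via the summability assumption (Assumption~\ref{assum:conv_rate}) and a selection term $\sum_m n_m(T)\Delta_m$, and you control each suboptimal pull count by the LCB inequality after a finite, $T$-independent iteration beyond which $\hat c_{m^*}^{(j-1)}\le c^*+\Delta_m/2$. That iteration is the paper's $J_\varepsilon$ with $\varepsilon=\Delta_m/2$, so the summability-based counting you first tried is not needed, and your fallback through infinitely many selections of $m^*$ is exactly what the paper does.
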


\begin{proof}
The regret accumulated before iteration $J_r$ consists of finitely many
terms and is therefore bounded above by a finite constant $C_{\rm tr}$.
For $T\ge J_r$, we have
\begin{align}
R_T
\le{}&
C_{\rm tr}
+
\underbrace{
\sum_{j=J_r}^{T}
\left(c_{m_j}^*-c^*\right)
}_{\text{(A) Suboptimal Selection Regret}}
+
\underbrace{
\sum_{j=J_r}^{T}
\left(c_j-c_{m_j}^*\right)
}_{\text{(B) Intra-Mode Cost Gap}}.
\label{eq:regret_decomposition}
\end{align}

For term (B), Assumption~\ref{assum:conv_rate} gives
\begin{align}
\sum_{j=J_r}^{T}
\left(c_j-c_{m_j}^*\right)
&\le
\sum_{m\in M_\infty}
\sum_{k=1}^{\infty}
\left(c_{m,k}-c_m^*\right)_+
\notag\\
&\le
\sum_{m\in M_\infty}C_m,
\end{align}
which is finite and independent of $T$.

It remains to bound term (A). Let
$m^*\in\arg\min_{m\in M_\infty}c_m^*$, and fix any mode $m$ with
$\Delta_m>0$. Since Assumption~\ref{assum:hard_consistency} holds with
$J_c=0$, Lemma~\ref{lem:intra_mode_cost_evolution}(i) implies
\begin{align}
\hat c_m^{(j-1)}
\ge
c_m^*.
\end{align}

Moreover, the LCB rule selects every mode infinitely often. In
particular, $m^*$ is selected infinitely often, and
$c_{m^*,k}\to c^*$. Hence, for any
$\varepsilon\in(0,\Delta_m)$, there exists a finite $J_\varepsilon$
such that
\begin{align}
\hat c_{m^*}^{(j-1)}
\le
c^*+\varepsilon,
\qquad
\forall j\ge J_\varepsilon.
\end{align}

If mode $m$ is selected at iteration $j\ge J_\varepsilon$, the LCB
rule gives
\begin{align}
\hat c_m^{(j-1)}
-
\kappa
\sqrt{
\frac{\log j}
{n_m(j-1)}
}
\le
\hat c_{m^*}^{(j-1)}
-
\kappa
\sqrt{
\frac{\log j}
{n_{m^*}(j-1)}
}.
\end{align}
Using the preceding bounds and dropping the nonpositive exploration
term associated with $m^*$, we obtain
\begin{align}
\Delta_m-\varepsilon
\le
\kappa
\sqrt{
\frac{\log j}
{n_m(j-1)}
}.
\label{eq:necessary_condition_eps}
\end{align}

Let $\tau_s$ denote the iteration at which mode $m$ is selected for the
$s$-th time. Applying
\eqref{eq:necessary_condition_eps} at $j=\tau_s$ gives, for all
sufficiently large $s$,
\begin{align}
s-1
\le
\frac{\kappa^2}{(\Delta_m-\varepsilon)^2}
\log\tau_s.
\end{align}
Consequently, there exists a finite constant $C_{0,m}$, independent of
$T$, such that
\begin{align}
n_m(T)
\le
\frac{\kappa^2}{(\Delta_m-\varepsilon)^2}
\log T
+
C_{0,m}.
\end{align}

Choosing $\varepsilon=\Delta_m/2$ yields
\begin{align}
n_m(T)
\le
\frac{4\kappa^2}{\Delta_m^2}
\log T
+
C_{0,m}.
\label{eq:pull_bound_clean}
\end{align}

Therefore,
\begin{align}
\sum_{j=J_r}^{T}
\left(c_{m_j}^*-c^*\right)
&\le
\sum_{m:\,\Delta_m>0}
n_m(T)\Delta_m
\notag\\
&\le
\sum_{m:\,\Delta_m>0}
\frac{4\kappa^2}{\Delta_m}\log T
+
\sum_{m:\,\Delta_m>0}
C_{0,m}\Delta_m.
\end{align}

Combining the bounds for terms (A) and (B) gives
\begin{align}
R_T
\le
\sum_{m:\,\Delta_m>0}
\frac{4\kappa^2}{\Delta_m}\log T
+
C,
\end{align}
where
\begin{align}
C
:=
C_{\rm tr}
+
\sum_{m\in M_\infty}C_m
+
\sum_{m:\,\Delta_m>0}C_{0,m}\Delta_m
\end{align}
is finite and independent of $T$. Enlarging $C$, if necessary, also
covers the finitely many horizons $T<J_r$. Therefore,
$R_T=O(\log T)$.
\end{proof}

\section{Simulation Study}
\label{sec:sim}

This section evaluates the proposed MM-LMPC framework on a minimum-time
reach--avoid task with multiple obstacles. We compare standard LMPC with the
hard-constrained and soft-regularized MM-LMPC designs under the same dynamics,
constraints, prediction horizon, and initial dataset. We also examine the
sensitivity of the MM-LMPC results to the LCB exploration parameter $\kappa$
and, for the soft design, the terminal-penalty weight $\rho$.

\subsection{Experimental Setup}
\label{subsec:sim_setup}

We consider a discrete-time Dubins-type vehicle model with state
$x_k=[p_{x,k}\;p_{y,k}\;v_k]^\top\in\mathbb{R}^3$ and input
$u_k=[\theta_k\;a_k]^\top\in\mathbb{R}^2$, governed by
\begin{subequations}\label{eq:sim_dubins_v14}
\begin{align}
 p_{x,k+1} &= p_{x,k}+v_k\cos\theta_k,\\
 p_{y,k+1} &= p_{y,k}+v_k\sin\theta_k,\\
 v_{k+1} &= v_k+a_k,
\end{align}
\end{subequations}
with input constraints
$-\pi/2\le \theta_k \le \pi/2$ and
$-0.8\le a_k \le 0.8$.
The initial and target states are $x_S=[0\;0\;0]^\top$ and
$x_F=[78\;0\;0]^\top$, respectively. The stage cost is one before reaching the
target and zero at the target. Hence, the realized closed-loop cost equals the
number of control inputs applied before task completion. The prediction horizon
is $N=5$.

The collision-free region is the exterior of the following three elliptical
obstacles:
$(p_{x,k}-18)^2/9^2+(p_{y,k}-12)^2/9^2\ge 1$,
$(p_{x,k}-40)^2/7^2+(p_{y,k}+6.5)^2/8^2\ge 1$, and
$(p_{x,k}-62)^2/7^2+(p_{y,k}-10)^2/8^2\ge 1$.
These obstacles induce eight route patterns according to whether a trajectory
passes above or below each obstacle. We denote the resulting mode set by
$\mathcal{M}=\{UUU,UUL,ULU,ULL,LUU,LUL,LLU,LLL\}$.
For example, $ULU$ denotes a trajectory that passes above the first
obstacle, below the second, and above the third.

For the $q$-th obstacle with center $(c_{x,q},c_{y,q})$, the route classifier
finds the trajectory sample whose horizontal coordinate is closest to
$c_{x,q}$. The $q$-th letter is set to $U$ when the corresponding vertical
coordinate is larger than $c_{y,q}$ and to $L$ otherwise. Applying this rule to
the three obstacles yields a label in $\{U,L\}^3$.

One feasible initial trajectory is prepared for each route pattern, as shown in
Fig.~\ref{fig:seeds_8modes_v14}. The corresponding costs are listed in
Table~\ref{tab:initial_demonstrations_v16}. Although the $LUL$ route is
geometrically the most direct, its initial trajectory has the largest cost.

\begin{figure}[t]
    \centering
    \includegraphics[width=0.8\linewidth]{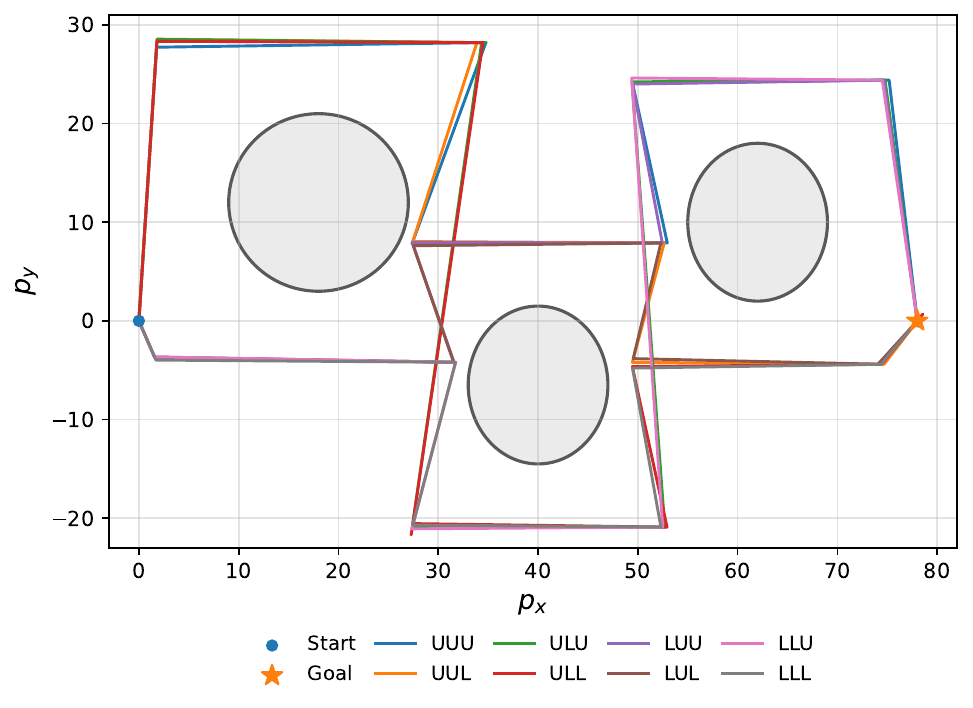}
    \caption{Initial trajectories for the eight route patterns in the
    three-obstacle reach--avoid task.}
    \label{fig:seeds_8modes_v14}
\end{figure}

\begin{table}[t]
\centering
\caption{Initial costs for the eight route modes.}
\label{tab:initial_demonstrations_v16}
\setlength{\tabcolsep}{6pt}
\renewcommand{\arraystretch}{1.0}
\begin{tabular}{cc@{\hspace{18pt}}cc}
\toprule
Route & Cost & Route & Cost \\
\midrule
$UUU$ & 111 & $LUU$ & 137 \\
$UUL$ & 121 & $LUL$ & 185 \\
$ULU$ & 135 & $LLU$ & 122 \\
$ULL$ & 108 & $LLL$ & 160 \\
\bottomrule
\end{tabular}
\end{table}

For the soft formulation, memberships are assigned directly from the route
labels. For a stored trajectory $\tau$ and mode $m$, let
$\ell(\tau),\ell(m)\in\{U,L\}^3$ denote their route labels. We define
$r_{\tau,m}=(1/3)\sum_{q=1}^{3}
\mathbf{1}\{\ell_q(\tau)=\ell_q(m)\}$,
so that $r_{\tau,m}\in\{0,1/3,2/3,1\}$. The corresponding terminal penalty is
$\psi_{\tau,m}=\rho(1-r_{\tau,m})$,
where $\rho\ge0$ determines the strength of the mode-dependent preference.
Each experiment is run for $30$ global iterations. For both MM-LMPC designs,
we evaluate $\kappa\in\{1,10,50,100\}$. For soft MM-LMPC, we additionally
evaluate $\rho\in\{0,10,30,50,150,200,300,500,800\}$.

All MPC problems were implemented using CasADi~\cite{Andersson2019}.
The implementation builds on the publicly available LMPC
code \cite{rosolia_lmpc_code}, which was extended
to implement the proposed hard-constrained and soft-regularized
MM-LMPC designs.

\subsection{Comparison and Effects of Design Parameters}
\label{subsec:sim_comparison_v14}

\subsubsection{Overall Performance Comparison}

We first briefly compare the best closed-loop costs attained by standard LMPC,
hard MM-LMPC, and soft MM-LMPC over 30 iterations. For each MM-LMPC
variant, we report the best result over the tested values of the LCB
exploration parameter $\kappa$ and, for soft MM-LMPC, the terminal-penalty
weight $\rho$. Table~\ref{tab:method_summary_v14} reports the best cost,
its first attainment iteration, and the corresponding parameter setting.

\begin{table}[t]
\centering
\caption{Best-cost comparison over 30 iterations.}
\label{tab:method_summary_v14}
\footnotesize
\renewcommand{\arraystretch}{1.08}
\setlength{\tabcolsep}{3.5pt}
\begin{tabular*}{\columnwidth}{
@{\extracolsep{\fill}}lcccc@{}
}
\toprule
Method
& \shortstack{Best\\cost}
& \shortstack{First\\iter.}
& $\kappa$
& $\rho$ \\
\midrule
Standard LMPC
& 24
& 13
& --
& -- \\
Hard MM-LMPC
& 22
& 15
& $\{50,100\}$
& -- \\
Soft MM-LMPC
& \textbf{21}
& \textbf{19}
& $10$
& $300$ \\
\bottomrule
\end{tabular*}
\end{table}

Standard LMPC attains its best cost of $24$ at iteration $13$.
As shown in Fig.~\ref{fig:example}, all of its closed-loop
trajectories follow the $UUU$ route, although initial trajectories are
available for all eight routes. Because standard LMPC uses a single
terminal memory shared across all routes, stored states on the $UUU$
route with relatively low cost-to-go values are repeatedly selected as
terminal candidates, causing the controller to continue refining the
same route.

Hard MM-LMPC attains a lower cost of $22$ at iteration $15$ for
$\kappa\in\{50,100\}$. Soft MM-LMPC attains the lowest cost of
$21$, first at iteration $19$ with $(\rho,\kappa)=(300,10)$.
All trajectories attaining cost $21$ are classified as $LUL$.
Overall, both MM-LMPC variants attain lower costs than standard LMPC,
with soft MM-LMPC achieving the lowest overall cost. As examined below,
this improvement is enabled by the ability of MM-LMPC to revisit and
refine routes other than the initially favored routes.

\subsubsection{Effects of the Design Parameters}
\label{subsec:sim_soft_v14}

We next examine how the exploration parameter $\kappa$ affects both
MM-LMPC designs and how the terminal membership penalty $\rho$ affects
soft MM-LMPC. For hard MM-LMPC, $\kappa=1$ attains a best cost of $23$
at iteration $10$, whereas $\kappa=10$ attains cost $22$ at iteration
$18$. For $\kappa\in\{50,100\}$, the same cost of $22$ is reached earlier,
at iteration $15$.

For soft MM-LMPC, the effects of $\rho$ and $\kappa$ are examined jointly.
Figure~\ref{fig:heatmap_v14} reports the best cost and its first attainment
iteration over the tested $(\rho,\kappa)$ grid. The lowest cost
of $21$ is attained for $(\rho,\kappa)\in\{(300,1),(300,10),(500,10),(800,10)\}$.
Among these settings, $(\rho,\kappa)=(300,10)$ reaches this cost first, at
iteration $19$. For $\rho=300$, $\kappa\in\{50,100\}$ attains a best cost of
$22$ within the $30$ iterations. As $\rho$ decreases toward zero, the
soft-regularized formulation approaches standard LMPC because the
membership-based terminal penalty vanishes, whereas larger values of $\rho$
enforce stronger mode preference and lead to behavior closer to hard mode
separation.

\begin{figure}[t]
    \centering
    \includegraphics[width=\linewidth]
    {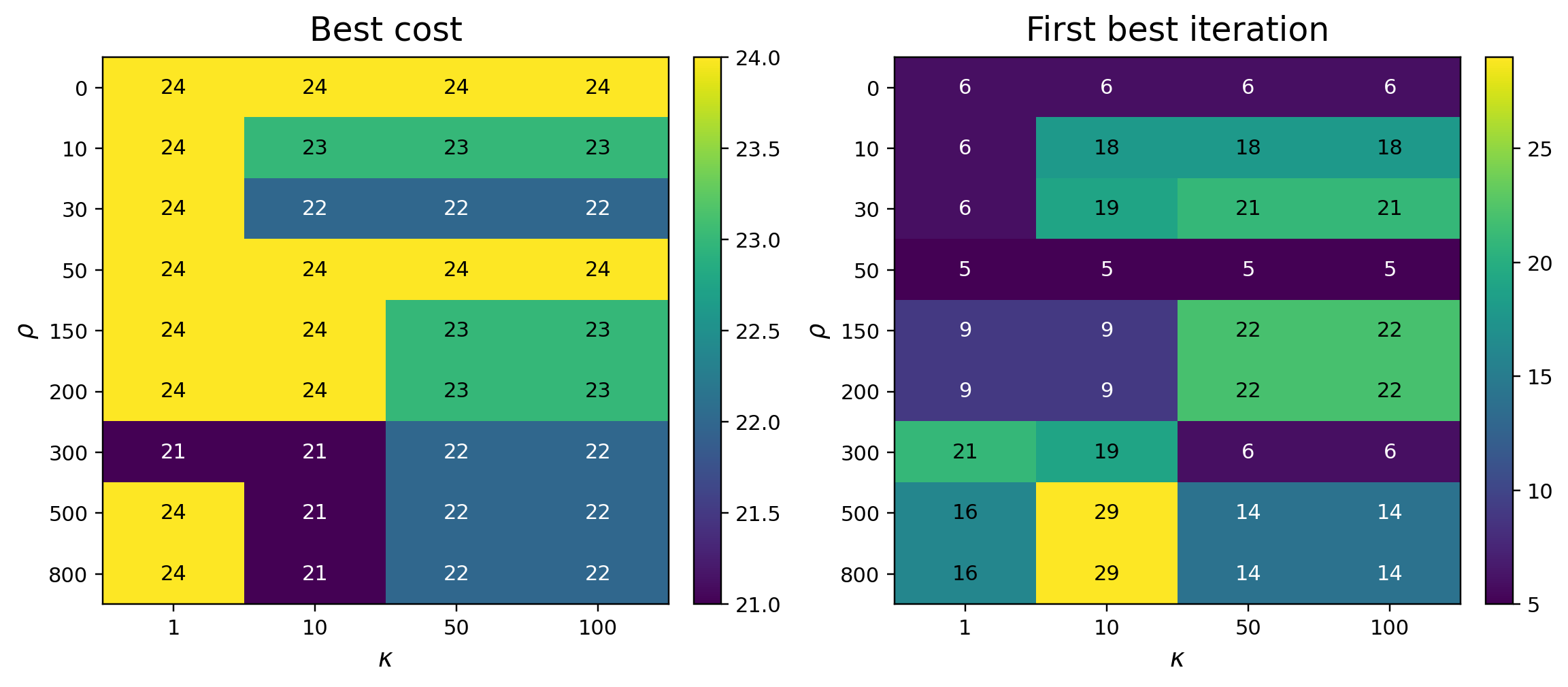}
    \caption{Performance of soft MM-LMPC over the tested
    $(\rho,\kappa)$ grid. Left: best attained closed-loop cost.
    Right: first iteration attaining that cost.}
    \label{fig:heatmap_v14}
\end{figure}

Table~\ref{tab:rho_sensitivity_v14} reports the results for different
values of $\rho$ with $\kappa=10$. Mode agreement is defined as the
fraction of iterations for which the selected mode equals the route label
of the resulting closed-loop trajectory.
\begin{table}[t]
\centering
\caption{Effect of $\rho$ on soft MM-LMPC with $\kappa=10$.}
\label{tab:rho_sensitivity_v14}
\footnotesize
\renewcommand{\arraystretch}{1.10}
\begin{tabular*}{\columnwidth}{
@{\extracolsep{\fill}}
rccc@{\hspace{10pt}}rccc@{}
}
\toprule
$\rho$
& \shortstack{Best\\cost}
& \shortstack{First\\iter.}
& \shortstack{Mode\\agreement}
& $\rho$
& \shortstack{Best\\cost}
& \shortstack{First\\iter.}
& \shortstack{Mode\\agreement} \\
\cmidrule(lr){1-4}\cmidrule(lr){5-8}
0   & 24 & 6 & 0.13
& 200 & 24 & 9 & 0.33 \\
10  & 23 & 18 & 0.13
& 300 & \textbf{21} & 19 & 0.77 \\
30  & \textbf{22} & 19 & 0.20
& 500 & \textbf{21} & 29 & 1.00 \\
50  & 24 & 5  & 0.13
& 800 & \textbf{21} & 29 & 1.00 \\
150 & 24 & 9 & 0.33
&     &    &    &      \\
\bottomrule
\end{tabular*}
\end{table}
At small $\rho$, mode agreement is low because the membership term has
little influence on the terminal-state choice. Increasing
$\rho$ generally strengthens the relation between the selected mode and the
realized route: mode agreement increases to $0.77$ at $\rho=300$ and reaches
$1.00$ for $\rho\in\{500,800\}$. The lowest cost of $21$ is attained for
$\rho\in\{300,500,800\}$, with $\rho=300$ reaching this cost earliest, at
iteration $19$. Thus, $\rho=300$ provides a favorable balance between mode
consistency and rapid discovery of a low-cost route in this experiment.

The executed trajectories in Fig.~\ref{fig:traj_rho_v14} show the
closed-loop behavior of soft MM-LMPC with $\kappa=10$ for selected
values of the terminal membership penalty $\rho$, together with the
hard MM-LMPC result for $\kappa=10$ in the bottom panel. At $\rho=10$,
the trajectories remain concentrated around the $UUU$ route, and the
resulting behavior closely resembles that of standard LMPC. At
$\rho=300$, the controller explores several route families, whereas the
larger penalty $\rho=500$ produces more mode-consistent trajectories.
For $\rho=500$, the selected mode and realized route agree in all
$30$ executions, illustrating the increasingly hard-like mode separation
induced by a large membership penalty.

\begin{figure}[t]
    \centering
    \includegraphics[width=\linewidth]
    {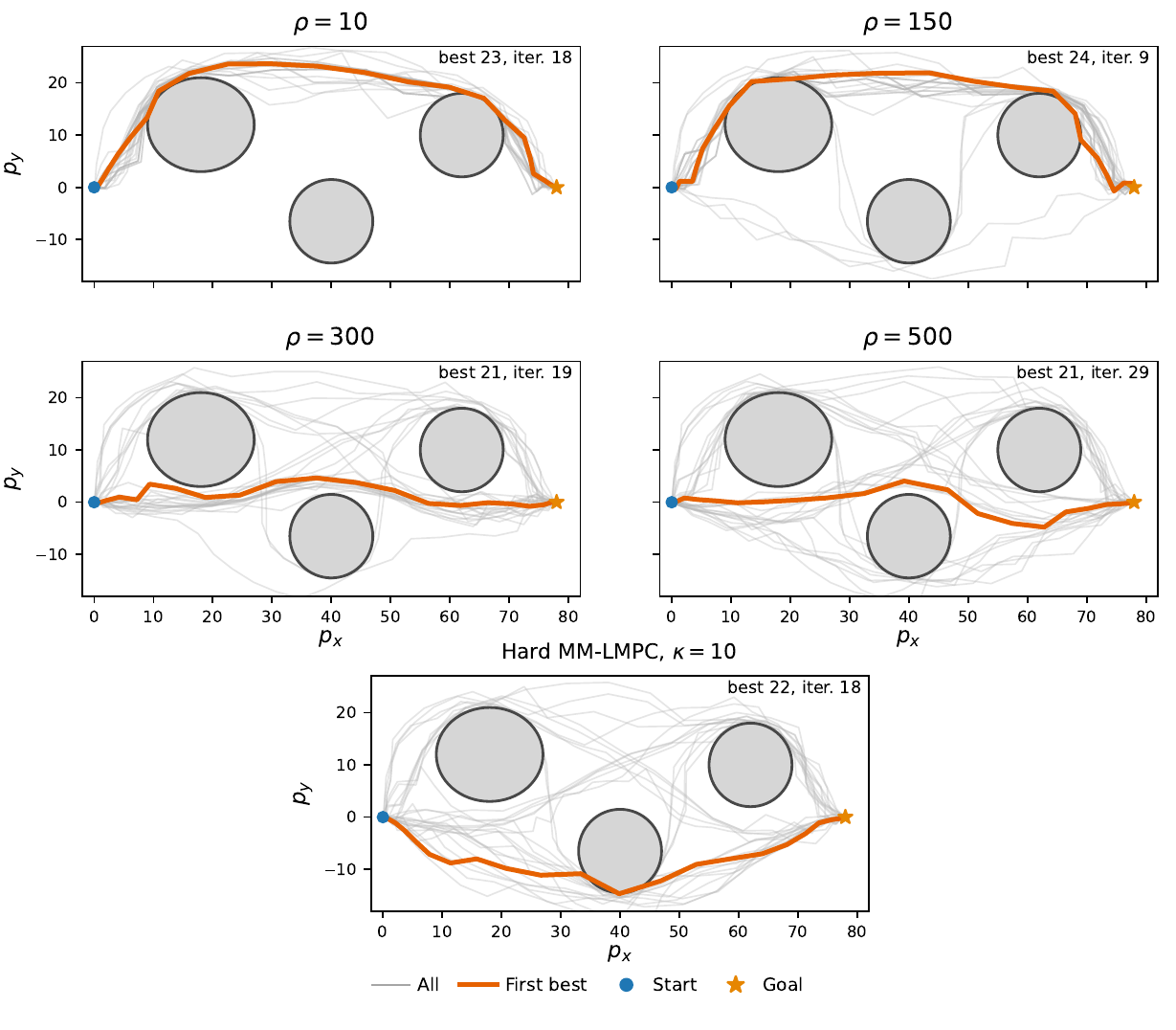}
    \caption{Executed trajectories for soft MM-LMPC at selected values
    of $\rho$ and for hard MM-LMPC, with $\kappa=10$ in all panels.
    Thin gray curves show all $30$ executions, and the solid highlighted
    curve is the first trajectory attaining the best cost in each setting.
    The hard-MM-LMPC result is shown in the bottom panel.}
    \label{fig:traj_rho_v14}
\end{figure}

Finally, Fig.~\ref{fig:traj_kappa_v14} illustrates how $\kappa$ affects
the mode-selection behavior of soft MM-LMPC at $\rho=300$. A smaller
value of $\kappa$ places greater emphasis on modes that have already
achieved low costs, so the executed trajectories remain concentrated
around the corresponding routes. As $\kappa$ increases, the exploration
term in the LCB rule becomes more influential, causing under-tested
modes to be selected more frequently and alternative routes to be
revisited more often.

\begin{figure}[t]
    \centering
    \includegraphics[width=\linewidth]
    {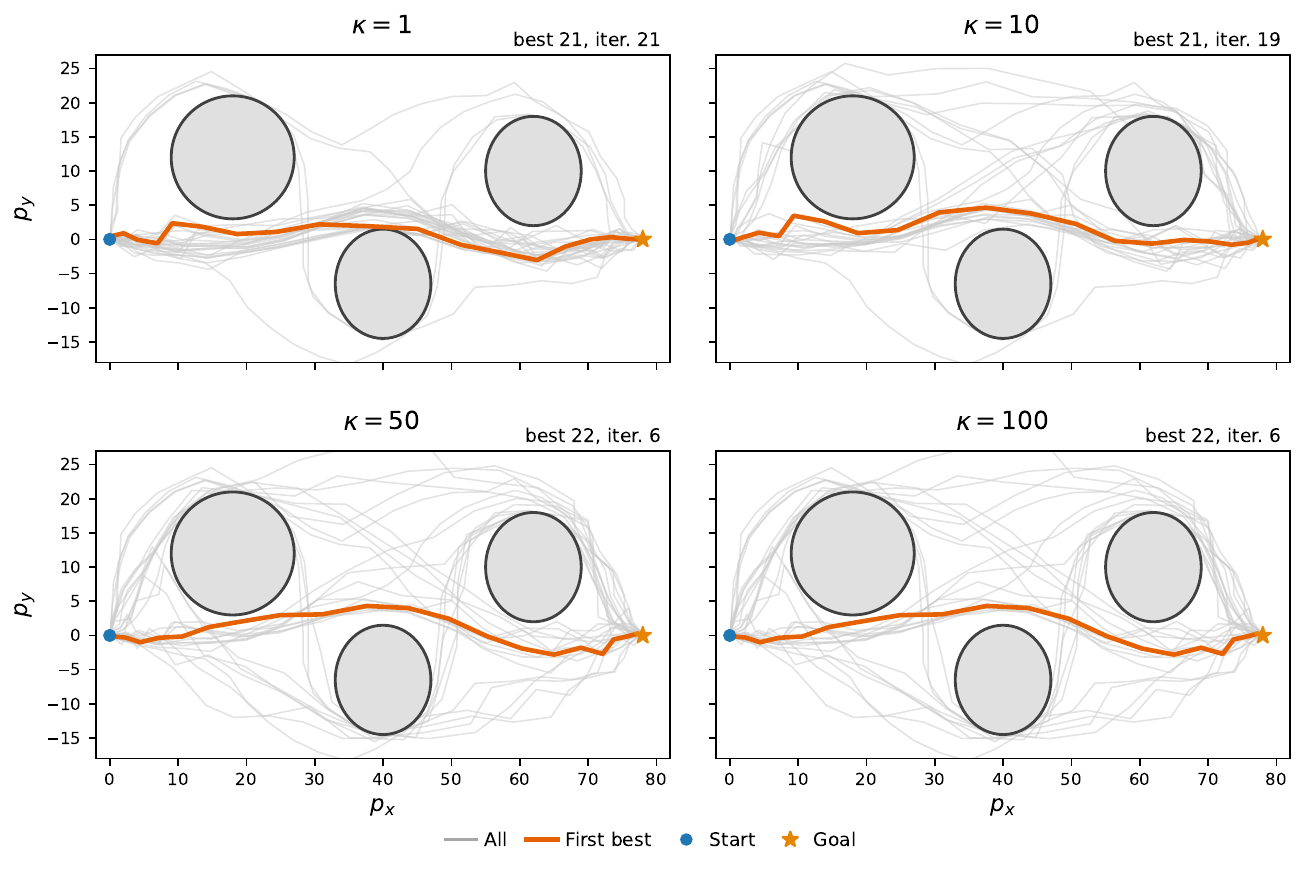}
    \caption{Executed trajectories for different values of $\kappa$
    with $\rho=300$. Thin gray curves show all $30$ executions, and the
    solid highlighted curve is the first best trajectory.}
    \label{fig:traj_kappa_v14}
\end{figure}

\section{Conclusion}

This paper presented MM-LMPC, a mode-aware extension of LMPC designed to reduce bias toward
initially favored solution modes. MM-LMPC selects a solution mode at each
iteration using an LCB-based exploration rule and incorporates the selected mode
into the LMPC terminal design. Two formulations were developed: a
hard-constrained formulation with mode-specific sampled safe sets and terminal
costs, and a soft-regularized formulation with a shared sampled safe set and
membership-based terminal penalties. We showed that both formulations preserve
recursive feasibility and closed-loop stability, characterized their mode-wise
cost evolution, and established, for the hard formulation, asymptotic best-mode
performance and an $O(\log T)$ regret bound. Simulations on minimum-time
reach--avoid tasks showed that MM-LMPC explores under-tested modes and finds
lower-cost trajectories than standard LMPC.

\bibliographystyle{IEEEtran}

\bibliography{IEEE}

\begin{thebibliography}{99}
\bibitem{rosolia2016learning}
U. Rosolia and F. Borrelli, "Learning model predictive control for iterative tasks," in \textit{2016 IEEE 55th Conference on Decision and Control (CDC)}, 2016, pp. 6498-6503. [2]

\bibitem{rosolia2017computationally}
U. Rosolia and F. Borrelli, "Learning model predictive control for iterative tasks: A data-driven control framework," \textit{IEEE Transactions on Automatic Control}, vol. 63, no. 7, pp. 1883-1896, 2017. [1]

\bibitem{rosolia2018learning}
U. Rosolia and F. Borrelli, "Learning model predictive control for linear systems: A computationally efficient approach," in \textit{2018 Annual American Control Conference (ACC)}, 2018, pp. 1-6. [3]

\bibitem{rawlings2017model}
J. B. Rawlings, D. Q. Mayne, and M. M. Diehl, \textit{Model Predictive Control: Theory, Computation, and Design}, 2nd ed. Nob Hill Publishing, 2017. [4]

\bibitem{lmpc_case_studies}
Number Analytics, "LMPC in Action: Case Studies," \textit{Number Analytics Blog}, 2023. [5]

\bibitem{hewing2020learning}
L. Hewing, A. D. Wabersich, M. N. Zeilinger, and J. Lygeros, "Learning-based model predictive control: Toward safe learning in control," \textit{Annual Review of Control, Robotics, and Autonomous Systems}, vol. 3, pp. 269-296, 2020. [6]

\bibitem{rosolia2017computationally_efficient}
U. Rosolia and F. Borrelli, "Learning Model Predictive Control for Iterative Tasks: A Computationally Efficient Approach for Linear System," \textit{arXiv preprint arXiv:1702.07064}, 2017. [7]

\bibitem{local_optima_ai_ml}
Alphanome, "Understanding Local Optima in AI/ML," \textit{Alphanome Blog}, 2023. [8]

\bibitem{local_optima_theory_of_constraints}
F. Labs, "Theory of Constraints 102: Local Optima," \textit{Praxis Blog}, 2020. [9]

\bibitem{lmpc_thesis}
U. Rosolia, "Learning Model Predictive Control for Iterative Tasks," Ph.D. dissertation, University of California, Berkeley, 2018. [10]

\bibitem{soloperto2023safe}
R. Soloperto, P. N. A. Sopasakis, A. B. H. Weiss, P. Patrinos, and S. Grammatico, "Safe Exploration and Escape Local Minima with Model Predictive Control under Partially Unknown Constraints," \textit{IEEE Transactions on Automatic Control}, 2023. [11, 12]

\bibitem{traj_opt_local_optima}
A. Dragan, "Trajectory Optimization," \textit{CS287 Lecture Notes, UC Berkeley}, 2019. [13]

\bibitem{lattimore2020bandit}
T. Lattimore and C. Szepesvári, \textit{Bandit Algorithms}. Cambridge University Press, 2020. [14]

\bibitem{sutton2018reinforcement}
R. S. Sutton and A. G. Barto, \textit{Reinforcement Learning: An Introduction}, 2nd ed. MIT Press, 2018.

\bibitem{auer2002finite}
P. Auer, N. Cesa-Bianchi, and P. Fischer, "Finite-time analysis of the multiarmed bandit problem," \textit{Machine Learning}, vol. 47, no. 2-3, pp. 235-256, 2002.

\bibitem{koller2018learning}
T. Koller, F. Berkenkamp, M. Turchetta, and A. Krause, "Learning-based model predictive control for safe exploration," in \textit{2018 IEEE Conference on Decision and Control (CDC)}, 2018, pp. 6059-6066. [15, 16, 17]

\bibitem{alcan2022safe}
G. Alcan, "Safe Model Predictive Control," \textit{Aalto University Research}, 2022. [18]

\bibitem{berkenkamp2017safe}
F. Berkenkamp, M. Turchetta, A. P. Schoellig, and A. Krause, "Safe model-based reinforcement learning with stability guarantees," in \textit{Advances in Neural Information Processing Systems 30 (NIPS)}, 2017, pp. 908-918. [19]

\bibitem{hewing2020learning_rl}
L. Hewing, J. Kabzan, and M. N. Zeilinger, "Cautious model predictive control using Gaussian process regression," \textit{IEEE Transactions on Control Systems Technology}, vol. 28, no. 6, pp. 2736-2743, 2020. [20]

\bibitem{soloperto2023learning}
R. Soloperto, \textit{Learning-based Model Predictive Control with closed-loop guarantees}. Logos Verlag Berlin, 2023. [21]

\bibitem{bhattacharya2010search}
P. Bhattacharya, M. Likhachev, and V. Kumar, "Search-based path planning with homotopy class constraints," in \textit{Proceedings of the AAAI Conference on Artificial Intelligence}, vol. 24, no. 1, 2010. [22, 23]

\bibitem{bhattacharya2010search_aaai}
P. Bhattacharya, M. Likhachev, and V. Kumar, "Search-based Path Planning with Homotopy Class Constraints," \textit{AAAI}, 2010. [22]

\bibitem{hauser2008multi}
K. Hauser, T. Bretl, J. C. Latombe, and S. Rock, "Multi-modal motion planning in non-expansive spaces," \textit{The International Journal of Robotics Research}, vol. 27, no. 10, pp. 1143-1161, 2008. [24]

\bibitem{hauser2007multimodal}
K. Hauser and V. Ng-Thow-Hing, "Multi-modal motion planning for a humanoid robot," in \textit{2007 IEEE/RSJ International Conference on Intelligent Robots and Systems}, 2007, pp. 364-370. [25]

\bibitem{suh2020optimal}
J. Suh, A. D. Ames, and Y. Yue, "Optimal multi-modal locomotion via a shortest-path graph search on a learned value function," in \textit{2020 IEEE/RSJ International Conference on Intelligent Robots and Systems (IROS)}, 2020, pp. 8171-8178. [26]

\bibitem{ge2025learning}
Z. Ge, C. Chen, A. Sinha, and P. Varakantham, "On Learning Informative Trajectory Embeddings for Imitation, Classification and Regression," in \textit{AAMAS}, 2025. [27]

\bibitem{sung2012trajectory}
C. Sung, D. Feldman, and D. Rus, "Trajectory clustering for motion prediction," in \textit{2012 IEEE/RSJ International Conference on Intelligent Robots and Systems}, 2012, pp. 4813-4820. [28]

\bibitem{agrawal2012analysis}
S. Agrawal and N. Goyal, "Analysis of Thompson sampling for the multi-armed bandit problem," in \textit{Conference on Learning Theory}, 2012, pp. 39.1-39.26.

\bibitem{lai1985asymptotically}
T. L. Lai and H. Robbins, "Asymptotically efficient adaptive allocation rules," \textit{Advances in Applied Mathematics}, vol. 6, no. 1, pp. 4-22, 1985.

\bibitem{auer2010ucb}
P. Auer, "UCB revisited: Improved regret bounds for the stochastic multi-armed bandit problem," \textit{Bernoulli}, vol. 16, no. 1, pp. 1-28, 2010.

\bibitem{ortner2012refutable}
R. Ortner and D. Ryabko, "Optimism in the face of uncertainty should be refutable," \textit{arXiv preprint arXiv:1205.4820}, 2012. [29]

\bibitem{foster2020model}
D. J. Foster, A. Krishnamurthy, and H. Luo, "Open Problem: Model Selection for Contextual Bandits," in \textit{Conference on Learning Theory}, 2020, pp. 3842-3846. [30]

\bibitem{pacchiano2022model}
A. Pacchiano, "Model Selection for Contextual Bandits and Reinforcement Learning," \textit{Personal Blog}, 2022. [31]

\bibitem{agarwal2017corralling}
A. Agarwal, H. Luo, B. Neyshabur, and R. E. Schapire, "Corralling a band of bandit algorithms," in \textit{Conference on Learning Theory}, 2017, pp. 12-46.

\bibitem{chatterjee2020bandit}
S. Chatterjee, V. S. Borkar, and A. Sinha, "Bandit model selection in stochastic environments," in \textit{Advances in Neural Information Processing Systems 33 (NeurIPS)}, 2020. [32]

\bibitem{rosolia2017adaptive}
U. Rosolia and F. Borrelli, "An adaptive model predictive control framework for uncertain nonlinear systems," in \textit{2017 American Control Conference (ACC)}, 2017, pp. 451-456. [1]

\bibitem{maddalena2021kernel}
E. Maddalena, P. Scharnofske, and C. N. Jones, "Kernel Predictive Control: A Learning-Based MPC with Deterministic Guarantees," in \textit{Conference on Learning for Dynamics and Control}, 2021, pp. 104-115. [33]

\bibitem{wu2023learning}
S. P. Wu, "Learning Efficiently with Trajectory Data for Real World Robotics," Ph.D. dissertation, University of California, Berkeley, 2023. [34]

\bibitem{zheng2025novel}
J. Yan, Y. Wu, K. Ji, C. Cheng, and Y. Zheng, "A novel trajectory learning method for robotic arms based on Gaussian Mixture Model and k-value selection algorithm," \textit{PLoS ONE}, vol. 20, no. 2, p. e0318403, 2025. [35]

\bibitem{hasan2024trajectory}
M. A. Hasan, M. A. R. Ahad, S. Antol, and R. K. R. Ramadoss, "Trajectory Learning Using HMM: Towards Surgical Robotics Implementation," \textit{Sensors}, vol. 24, no. 10, p. 3179, 2024. [36, 37]

\bibitem{lee2007trajectory}
J.-G. Lee, J. Han, and K.-Y. Whang, "Trajectory clustering: a partition-and-group framework," in \textit{Proceedings of the 2007 ACM SIGMOD international conference on Management of data}, 2007, pp. 593-604.

\bibitem{yao2019survey}
D. Yao, C. Zhang, Z. Huang, J. Bi, and S. Li, "A survey on trajectory clustering analysis," \textit{Journal of Big Data}, vol. 6, no. 1, p. 104, 2019. [38]

\bibitem{calinon2007learning}
S. Calinon, F. Guenter, and A. Billard, "On learning, representing, and generalizing a task in a humanoid robot," \textit{IEEE Transactions on Systems, Man, and Cybernetics, Part B (Cybernetics)}, vol. 37, no. 2, pp. 286-298, 2007.

\end{thebibliography}

\vfill

\end{document}